\DeclareMathOperator{\pa}{\partial}
\newcommand{\oo}{\circ}
\newcommand{\e}{\varepsilon}
\newcommand{\m}{\mu}
\newcommand{\n}{\nu}
\newcommand{\al}{\alpha}
\newcommand{\be}{\beta}
\newcommand{\de}{\delta}
\newcommand{\De}{\Delta}
\newcommand{\ph}{\phi}
\newcommand{\Ph}{\Phi}
\newcommand{\g}{\gamma}
\newcommand{\G}{\Gamma}
\newcommand{\La}{\Lambda}
\newcommand{\x}{\xi}
\newcommand{\Si}{\Sigma}
\newcommand{\si}{\sigma}
\newcommand{\om}{\omega} 
\newcommand{\Om}{\Omega}
\newcommand{\te}{\theta}
\newcommand{\z}{\zeta}
\newcommand{\ta}{\tau}
\newcommand{\CD}{\mathcal{D}}
\newcommand{\CT}{\mathcal{T}}
\newcommand{\CP}{\mathcal{P}}
\newcommand{\uvd}{\underline{\mathrm{deg}}}
\newcommand{\WF}{\mathrm{WF}}
\newcommand{\bydef}{\stackrel{.}{=}}
\newcommand{\oli}{\overline}
\newcommand{\ul}{\underline}
\newcommand{\na}{\nabla}
\newcommand{\IR}{\mathbb{R}}
\newcommand{\IC}{\mathbb{C}}
\newcommand{\II}{\mathbbm{1}}
\newcommand{\IP}{\mathbb{P}}
\newcommand{\SF}{\mathscr{F}}
\newcommand{\SCA}{\mathscr{A}}
\newcommand{\SCB}{\mathscr{B}}
\newcommand{\SCL}{\mathscr{C}}
\newcommand{\SD}{\mathscr{D}}
\newcommand{\ST}{\mathscr{T}}
\newcommand{\SV}{\mathscr{V}}
\newcommand{\SE}{\mathscr{E}}
\newcommand{\SZ}{\mathscr{Z}}
\DeclareFontFamily{U}{mathx}{\hyphenchar\font45}
\DeclareFontShape{U}{mathx}{m}{n}{
      <5> <6> <7> <8> <9> <10>
      <10.95> <12> <14.4> <17.28> <20.74> <24.88>
      mathx10
      }{}
\DeclareSymbolFont{mathx}{U}{mathx}{m}{n}
\DeclareMathAccent{\widecheck}{0}{mathx}{"71}
\DeclareFontFamily{U}{mathb}{\hyphenchar\font45}
\DeclareFontShape{U}{mathb}{m}{n}{
      <5> <6> <7> <8> <9> <10>
      <10.95> <12> <14.4> <17.28> <20.74> <24.88>
      mathb10
      }{}
\DeclareSymbolFont{mathb}{U}{mathb}{m}{n}
\DeclareMathSymbol{\lineco}{2}{mathb}{'156}
\DeclareFontFamily{U}{matha}{\hyphenchar\font45}
\DeclareFontShape{U}{matha}{m}{n}{
<5>matha5<6>matha6<7>matha7<8>matha8<9>matha9
<10><10.95>matha10
<12><14.4><17.28><20.74><24.88>matha12
}{}
\DeclareSymbolFont{matha}{U}{matha}{m}{n}
\DeclareMathSymbol{\ovoid}{\mathbin}{matha}{"6C}
\theoremstyle{plain}
\newtheorem{thm}{Theorem}
\newtheorem{lem}{Lemma}
\newtheorem{prop}{Proposition}
\newtheorem{cor}{Corollary}
\theoremstyle{definition}
\newtheorem{defi}{Definition}
\theoremstyle{remark}
\newtheorem{rmk}{Remark}
\title{Normal Products and Zimmermann Identities in Configuration Space BPHZ Renormalization}
\author[1,2]{Steffen Pottel \thanks{Electronic address: steffen. pottel 'at' itp. uni-leipzig. de}}
\affil[1]{\small Institut f\"ur Theoretische Physik, Universit\"at Leipzig, Postfach 100920, D-04009 Leipzig, Germany }
\affil[2]{\small Max-Planck-Institute for Mathematics in the Sciences, Inselstra\ss e 22, D-04103 Leipzig, Germany}
\date{\today}
\begin{document}

\maketitle

\begin{abstract}\begin{center}\begin{minipage}{0.8\textwidth}
	\noindent
	The notion of normal products, a generalization of Wick products, is derived with respect to BPHZ renormalization formulated entirely in configuration space. 
	Inserted into time-ordered products, normal products admit the limit of coinciding field operators, which constitute the product.	
	The derivation requires the introduction of Zimmermann identities, which relate field monomials or renormalization parts with differing subtraction degree. 
	Furthermore, we calculate the action of wave operators on elementary fields inserted into time-ordered products, exploiting the properties of normal products.
	\end{minipage}\end{center}
\end{abstract}

\maketitle

\setlength{\parindent}{0em}
\setlength{\parskip}{1.0ex plus 0.5ex minus 0.2ex}

\section{Introduction}

In perturbative quantum field theory, most physical quantities are ill-defined and the approximation by a formal power series about the free theory turns out to be too rough already at finite order.
A constructive way to extract physically reasonable information can be found in renormalization theory.
There are many renormalization schemes available and the fact that all of them are equivalent does not suggest that one or the other scheme makes the renormalization of a perturbatively treated quantum field theory significantly easier.
Usually, popular schemes are preferred for certain aspects or are more convenient in treating certain problems.
For instance, the Epstein-Glaser scheme \cite{Epstein:1973gw} or analytic renormalization \cite{Speer:1972wz} maintain causality in the construction.
Dimensional renormalization \cite{Bollini:1972ui,'tHooft:1972fi} is widely used for explicit computations.
And the BPHZ scheme \cite{Bogoliubov:1957gp,Hepp:1966eg,Zimmermann:1968mu,Zimmermann:1969jj} for massive fields as well as the BPHZL scheme \cite{Lowenstein:1975rg,Lowenstein:1975ps,Lowenstein:1975ku} if additionally massless fields are included, are often applied in studies regarding the structural properties of a quantum field theory.  
Its probably most prominent feature is the forest formula, which resolves the combinatorial structure of singularities found in the loop structure of weighted Feynman graphs.
The formula can be found in various approaches to renormalization \cite{Breitenlohner:1975hg,Breitenlohner:1976te,Keller:2010xq,Hollands:2010pr} and its deeper mathematical structure has been established in the realm of Hopf algeras \cite{Connes:1999yr,Connes:2000fe,EbrahimiFard:2010yy,Hairer:2017nmk}, where regularization have been used, which differ from Zimmermann's approach. 
It turns out that the regularization method determines the renormalization parts, i.e. parts of the weighted Feynman graph which require renormalization, and thus influences the proof of the forest formula.
In Bogoliubov's approach, a variation of Hadamard regularization for singular integrals is used.
The variation is necessary due to the formulation of quantum field theories with coupling constants \cite{Itzykson:1980rh}, i.e. not at each vertex of the Feynman graph there is a test function at disposal, and it consists of subtracting Taylor polynomials of the graph weights, which must belong to the complement of the renormalization part.
This choice of regularization entails a modification of the graph structure, once the subtractions are performed, which can be of use for various applications.
With the reduction formalism at hand, it is possible to define composite operators of interacting quantum fields out of perturbation theory \cite{Zimmermann:1972te}. 
Moreover, the scheme admits the assignment of scaling dimensions to such composite operators, which are greater or equal to the naive scaling dimension. 
This may lead to a change in the $R$-operation for affected weighted Feynman graphs and is sometimes called oversubtraction. 
Then it is quite natural to ask whether different assignments can be related to each other. 
The positive answer to this question is given by the Zimmermann identity \cite{Zimmermann:1972te}, which establishes that two choices of degrees for the same composite operator differ only by a finite sum of other composite operators with well-defined degrees. 
The identity \cite{Clark:1976ym} as well as the scheme \cite{Gomes:1974cr} can be generalized in order to derive the equation of motion for specific quantum fields and study symmetry breaking in the sense that the breaking of Ward-Takahashi identities is given by an insertion of a composite operator into the correlation functions order by order in Planck's constant \cite{Lowenstein:1971jk}. 
This property was used for BRST quantization \cite{Becchi:1975nq,Tyutin:1975qk} and in regard to parametric differential equations \cite{Zimmermann:1979fd}, where reviews on the technique can be found in \cite{Piguet:1980nr,Piguet:1986ug} and some illustrating examples are provided by \cite{Kraus:1991cq,Kraus:1992ru,Kraus:1997bi,Pottel:2010hp}. 
Having well-defined composite operators offers another application in view of coincidence limits of quantum fields, where products of quantum fields can be expressed by local (composite) fields multiplied by structure functions, which capture the singular behavior of the initial fields approaching each other in spacetime \cite{Wilson:1972ee}. 
With a generalization of Wick ordering of quantum fields, named normal products, it is then possible to prove the operator product expansion in perturbation theory \cite{Zimmermann:1972tv}.

In recent years, many renormalization techniques have been revisited in view of the progress in the formulation of quantum field theory on curved spacetimes \cite{Hollands:2014eia,Fredenhagen:2015iia}.
In this setting, the Epstein-Glaser method has been extended \cite{Brunetti:1999jn,Hollands:2001nf,Hollands:2001fb} and some of its aspects were studied and improved \cite{Bahns:2012pw,GraciaBondia:2002js,GraciaBondia:2002gq,Falk:2009ug}.
Other schemes were translated to configuration space \cite{Duetsch:2013xca} and formulated for non-trivial geometries \cite{Gere:2015qsa,Dang:2017gma}.
Also the BPHZ scheme was modified \cite{Zavyalov:1990kv} and related to configuration space by either Fourier transformation of Epstein-Glaser renormalization \cite{GraciaBondia:2002js,GraciaBondia:2002gq,Prange:1997iy} or by Fourier transformation of the graph weights as functions for quantum electrodynamics \cite{Steinmann:2000nr}.
The latter approach is picked up and generalized to general Feynman graphs in \cite{Pottel:2017aa}.
In \cite{Pottel:2017bb}, the BPHZ method was formulated in configuration space, extending its applicability to quantum field theories defined on analytic spacetimes. 
In this regard, it is natural to derive normal products, a generalization of Wick products, in the configuration space prescription. 
Inserted into correlation functions, those quantities remain finite in the limit of coinciding arguments.
If this property would hold already for Wick products, there would be no need for renormalization.
Since the formulation of the scheme is based on Wick products, normal products have to be viewed as a tool derived from the BPHZ prescription. 
We observe that renormalization parts may change their singular behavior in the limit of coinciding arguments so that we are confronted with the problem of relating renormalization parts of different degrees, where, indeed, increasing the degree of a renormalization part does not break the effect of the $R$-operation. 
Following this observation, we show that Wick monomials with differing degrees assigned to them are related by the configuration space version of the Zimmermann identity and, using the Zimmermann identity, we prove that the limit of coinciding arguments exists for normal products, which have to be defined recursively due to the structure of renormalization parts. 
Finally, we discuss Wick monomials containing the wave operator and derive an equation of motion for the quantum field in perturbation theory. 

The paper is organized as follows. After reviewing the notions of BPHZ renormalization in configuration space, we derive Zimmermann identy in the third and normal products in the fourth section. The wave equation is studied in Section 5. Finally, we draw some conclusions and relate the results to on-going research.

\section{Configuration Space BPHZ Renormalization}

Consider a scalar field $\ph$ fulfilling the linear equation of motion
\begin{align}
	P\ph(x)=0 
\end{align}
on a four-dimensional globally hyperbolic analytic spacetime $(M,g)$ with Lorentzian metric $g$ and with $P$ being a normally, hyperbolic differential operator of second order. 
Promoting the field to a distribution $\ph(f)$ with $f\in\CD(M)$, it is used to generate the free, unital $*$-algebra $\SCA(M,g)$, which satisfies the conditions
\begin{align}
	&\ph(f)^* - \ph(\oli f) = 0, \\
	&\ph(Pf) = 0, \\
	&\ph(a f_1 + b f_2) - a \ph(f_1) - b \ph(f_2) = 0 \mbox{ with } a,b\in \IC, \\
	&[\ph(f_1),\ph(f_2)] - iF(f_1,f_2) \II = 0,
\end{align}
where $F$ is the commutator function defined as the difference of advanced and retarded fundamental solution. 
A state 
\begin{align}
	\om: \SCA(M,g) \rightarrow \IC 
\end{align}
on $\SCA(M,g)$ is said to be Hadamard if the singularity structure of the two-point functions is fully contained in the Hadamard parametrix 
\begin{align}\label{eq:HadamardParametrix}
	H(x,y) = \frac{1}{4\pi^2} \left[ \frac{U(x,y)}{\si(x,y)} + V(x,y) \log\left(\frac{\si(x,y)}{\La}\right) \right],
\end{align}
where $V$ is a formal power series with finite radius of convergence and $\si$ denotes the squared geodesic distance. 
This notion admits the state-independent definition of Wick ordering in a geodesically convex region $\Om\subset M$, which is recursively given by \cite{Hollands:2001nf}
\begin{align}
	:\ph(f):_H                           & \bydef \ph(f) \nonumber\\
	:\ph(f_1)...\ph(f_n):_H \ph(f_{n+1}) & = :\ph(f_1)...\ph(f_{n+1}):_H  \nonumber \\
	  & \qquad + \sum_{j=1}^n :\ph(f_1)...\widecheck{\ph(f_j)}...\ph(f_n):_H H(f_j,f_{n+1}),\label{eq:WickOrdering} 
\end{align}
where $\widecheck\bullet$ denotes the extraction of that field from the Wick polynomial. 
This prescription of ordering admits the limit $f_j\rightarrow f$ and, denoting by $\CP(x)$ a polynomial in the metric $g$, the Riemann tensor $R_{abcd}$, its symmetrized covariant derivatives as well as the mass $m$ and the coupling $\x$ to the curvature, a generalized Wick monomial can be written as 
\begin{align}\label{eq:CurvPoly}
	:\Ph(x):_{H} \bydef :\CP(x) \prod \na_{(f_1}...\na_{f_l)}\ph(x):_{H}
\end{align}
and used to define the algebra of field observables
\begin{align}
  	\SCB(M,g) \bydef \left\{ :\Ph:_{H} (f) | f\in\CD(M) \right\}. 
\end{align}
Furthermore, we introduce the notion of time-ordered products
\begin{align}\label{eq:NaiveTimeOrdering}
	\CT(:\Ph(x):_H) & \bydef :\Ph(x):_H ,\\
	\CT(:\Ph(x):_H \cdot :\Ph(y):_H) & \bydef 
	\begin{cases} 
     	:\Ph_1:_{H} (x) \cdot :\Ph_2:_{H} (y) \quad \mbox{  for  } \quad x\notin J^-(y)\\
     	:\Ph_2:_{H} (y) \cdot :\Ph_1:_{H} (x) \quad \mbox{  for  } \quad y\notin J^-(x),
   	\end{cases}
\end{align}
where higher orders are recursively defined via \eqref{eq:WickOrdering}. 
Their domain is restricted to the complement of the union over all sets of coinciding arguments, where the latter is called the large diagonal. 
The extension of the time-ordered product to the diagonal in a physically reasonable way is the objective of renormalization theory and we want to employ the method developed in \cite{Pottel:2017bb}. 
For this purpose, recall that $(M,g)$ is globally hyperbolic such that $M$ is isometric to $\IR \times \Si$, i.e. a foliation of Cauchy surfaces $t\times\Si$ parametrized by $t\in\IR$.
With this, the metric is given by $g=\be dt^2 - g_t$, where $\be$ is smooth and positive, and $g_t$ is a Riemannian metric on $\Si$ depending smoothly on $t$.
Then the first step in the construction of the BPHZ scheme is the analytic continuation of the metric
\begin{align}\label{eq:MetricAnaCont}
	g^\e \bydef (1-i\e) \be dt^2 - g_t,\quad \e>0 \, . 
\end{align}
Defining a Riemannian metric $g^R = \be dt^2 + g_t$, it follows that, in a geodesically convex region $\Om\subset \IR\times\Si$, for any $x\in\Om$ and $\x\in T_x\Om$ the inequality
\begin{align}\label{eq:MetricBounds}
	\check C(\e) g^R(\x,\x) \leq |g^\e(\x,\x)| \leq \hat C(\e) g^R(\x,\x)
\end{align}
holds, where $\check C(\e)$ and $\hat C(\e)$ are positive constants for fixed $\e$.
The analytic continuation of the metric entails a consistent change in the differential operator $P\rightarrow P_\e$ as well as the Hadamard parametrix $H\rightarrow H_\e$, i.e. $\si\rightarrow\si_\e$, $U\rightarrow U_\e$ and $V\rightarrow V_\e$, so that 
\begin{align}
	P_\e H_\e = 0
\end{align}
still holds.
Returning to the definition of time-ordered products, we introduce the global time function $T$, which is the projection onto the first component of $\IR\times\Si$, and define the Feynman propagator 
\begin{align}
	H_F(x,y) &= \lim_{\e\rightarrow0^+}(\te(T(x)-T(y))H_\e(x,y) + \te(T(y)-T(x))H_\e(y,x)) \\
	H_{F,\e}(x,y) &= (\te(T(x)-T(y))H_\e(x,y) + \te(T(y)-T(x))H_\e(y,x)),
\end{align}
where $\te$ denotes the Heaviside step function and $H_F(x,y) \in \CD'(M\times M\setminus \mbox{''diagonal''})$.
We observe that
\begin{align}
	P_\e H_{F,\e} = \de
\end{align}
with the character of the differential operator 
\begin{align}
	\mathrm{char}(P_\e) = \emptyset
\end{align}
following from \eqref{eq:MetricBounds}.
Then we have \cite{hormander1990analysis}
\begin{align}\label{eq:WFAreduction}
	\WF_A(H_{F,\e}) &\subseteq \mathrm{char}(P_\e) \cup \WF_A(\de) \\
	&= \WF_A(\de),
\end{align}
which implies that pointwise products of Feynman propagators $H_{F,\e}\in\CD'((\Om\times\Om)\setminus \mbox{``diagonal''})$ exist in a geodesicaly convex region $\Om\subset M$. 
Two remarks are in order.
First, \eqref{eq:MetricAnaCont} must not be mistaken for a Wick rotation, which is usually performed on the level of coordinates instead of the metric, i.e. rotating the time coordinate $t\rightarrow i\tau$ into the complex plane. \cite{Gerard:2017apb}
Second, causality, maintained in the Epstein-Glaser scheme or analytic renormalization, is restored in the limit $\e\rightarrow0$.

For the transition to Feynman graphs, we use a local version of Wick's theorem \cite{Hollands:2001fb}
\begin{multline}\label{eq:LocalWickOrdering}
	\CT \left\{ :\Ph_1(x_1): ... :\Ph_n(x_n): :\Ph(x_{n+1}):... :\Ph(x_{n+m}): \right\} \\
	= \sum_{\al_1...\al_{n+m}} \frac{1}{|\al_1|! ... |\al_{n+m}|!} :\Ph^{\al_1}_1(x_1)... \Ph^{\al_n}_n(x_n) \Ph^{\al_{n+1}}(x_{n+1})... \Ph^{\al_{n+m}}(x_{n+m}): \times \\
	\times \prod_{\substack{(i,j)\subset\{\al_1,...,\al_{n+m}\}\\i<j}} (\na_{i_1}...\na_{i_l})_{x_i} (\na_{j_1}...\na_{j_k})_{x_j} H_F(x_i, x_j),
\end{multline} 
where the multi-indices $\al_i$ denote functional derivatives with respect to $\ph$ acting on the monomial $\Ph(x_j)$ and $(\na_{i_1}...\na_{i_l})_{x_i}$ denote the symmetrized covariant derivatives from \eqref{eq:CurvPoly}.
Then the quantity under consideration in the extension problem, i.e.
\begin{align}
	\prod_{\substack{(i,j)\subset\{\al_1,...,\al_{n+m}\}\\i<j}} (\na_{i_1}...\na_{i_l})_{x_i} (\na_{j_1}...\na_{j_k})_{x_j} H_F(x_i, x_j) \, ,
\end{align}
can be written in terms of weighted Feynman graphs, i.e. multigraphs $\G(V,E)$ with vertex set $V$ and edge set $E$. 
Assigning a direction to each edge $e\in E$, we get $\pa e=(s(e),t(e))$ with $s,t:E\rightarrow V$, so that the weight over $e$ can be written as
\begin{align}
	u_0[e] \bydef (\na_{i_1}...\na_{i_l})_{x_{s(e)}} (\na_{j_1}...\na_{j_k})_{x_{t(e)}} H_F(x_{s(e)}, x_{t(e)}) \in \CD'(\Om \times \Om \setminus \mbox{``diagonal''}) \, .
\end{align}
For completeness, we assign to each vertex $v_i\in V(\G)$ the weight
\begin{align}
	u_0[v_1,..., v_{n+m}] \bydef :\Ph^{\al_1}_1(x_{v_1})... \Ph^{\al_n}_n(x_{v_n}) \Ph^{\al_{n+1}}(x_{v_{n+1}})... \Ph^{\al_{n+m}}(x_{v_{n+m}}):,
\end{align}
which is smooth in each variable.
We observe that the product over all edges $e\in E(\G)$ is well-defined for $\e > 0$ except for configurations with coinciding vertices due to \eqref{eq:WFAreduction}.
We call the latter configurations the large graph diagonal and denote those by $\oo$, so that
\begin{align}
	u_0^\e[\G] \bydef u_0^\e[v_1,..., v_{n+m}] \prod_{e\in E(\g)} u_0^\e[e] \in \CD'(\Om^{|V(\G)|}\setminus \oo).  
\end{align}  
Furthermore, the extension problem can be reformulated as a problem of local integrability of $u_0^\e[\G]$, which, for a scalar field of dimension 1 in a four-dimensional spacetime, is assessed by the UV-degree of divergence
\begin{align}
	\uvd(u_0^\e[\g]) \bydef 2 |E(\g)| - 4(|V(\g)|-1)
\end{align}
for all $\g\subseteq\G$.
If the UV-degree of divergence is non-negative for at least one $\g\subseteq\G$, then $\G$ requires renormalization and $\g$ is called renormalization part.  
For any graph $\g\subseteq\G$ with $\uvd(u_0^\e[\G])\geq 0$, we have to prescribe a regularization, which is based on the Hadamard regularization, i.e. we want to subtract a Taylor polynomial
\begin{align}\label{eq:NaiveTaylor}
  		t^{d(\g)}_{x_\g|\oli{x}_\g} u_0^\e[\G] \bydef \sum_{|\al|=0}^{d}\limits \frac{(x-\oli{x})^\al}{\al!} D^\al_{x_\g|\oli{x}_\g} u_0^\e[\G]
\end{align}
with sufficiently high order $d(\g)\bydef \lfloor \uvd(u_0^\e[\g]) \rfloor$, where $x_\g$ are the loci of the vertices $V(\g)$ and the subtraction is performed about a point
\begin{align}
	\oli{x}_\g \bydef \frac{1}{2|E(\g)|} \sum_{v\in V(\g)} |E(\g|v)| x_{v},
\end{align} 
with $E(\g|v)$ denoting the set of all edges in $\g$ which are incident to the vertex $v\in V(\g)$.
Since $\oli{x}_\g$ is the singular point of $u_0^\e[\g]$, the Taylor operation \eqref{eq:NaiveTaylor} is well-defined only on the line complement $\G\lineco\g = E(\G)\setminus E(\g)$, so that we define
\begin{align}
	t(\g) \bydef t^{d(\g)}_{x_\g|\oli{x}_\g} \IP(\g)
\end{align}
which acts according to (see Figure \ref{fig:TaylorOperator})
\begin{align}
	t(\g) u_0^\e[\G] \bydef t^{d(\g)}_{x_\g|\oli{x}_\g} \IP(\g) u_0^\e[\G] = u_0^\e[\g] t^{d(\g)}_{x_\g|\oli{x}_\g} u_0^\e[\G\lineco\g].
\end{align}
\begin{figure}[h]
	\centering
	\begin{subfigure}[b]{0.45\textwidth}
		\includegraphics[width=\textwidth]{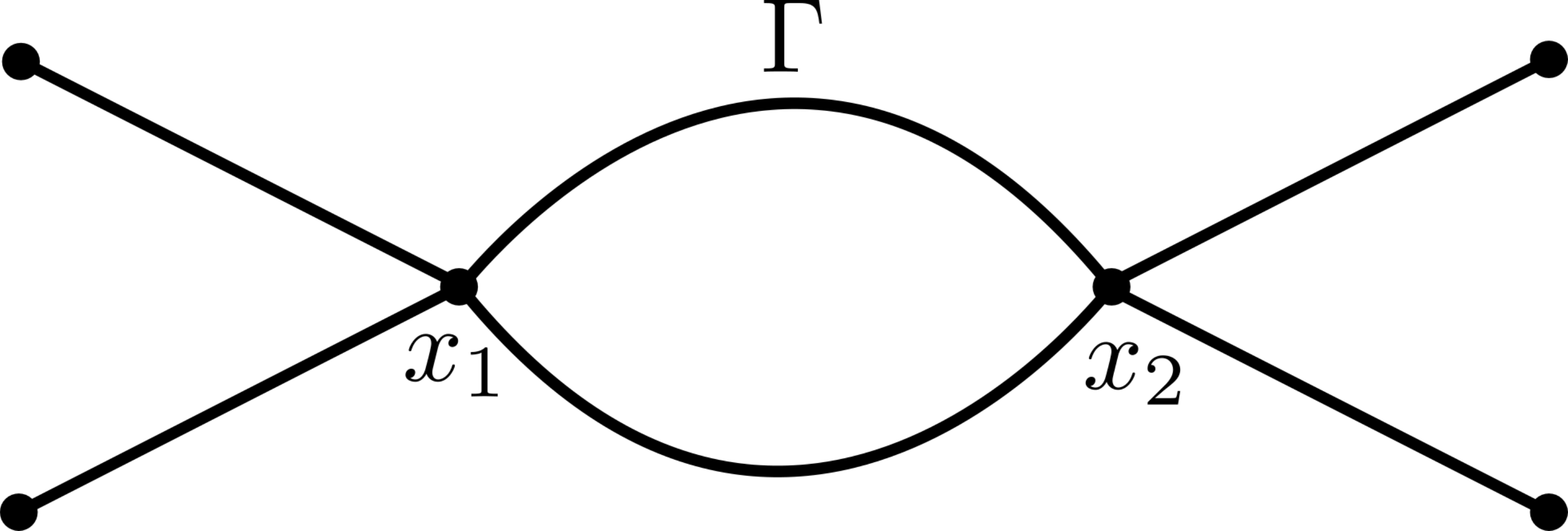}
		\caption{The initial graph $\G$.}
		\label{fig:Gamma}		
	\end{subfigure}
	\begin{subfigure}[b]{0.45\textwidth}
    \centering
		\includegraphics[width=0.5\textwidth]{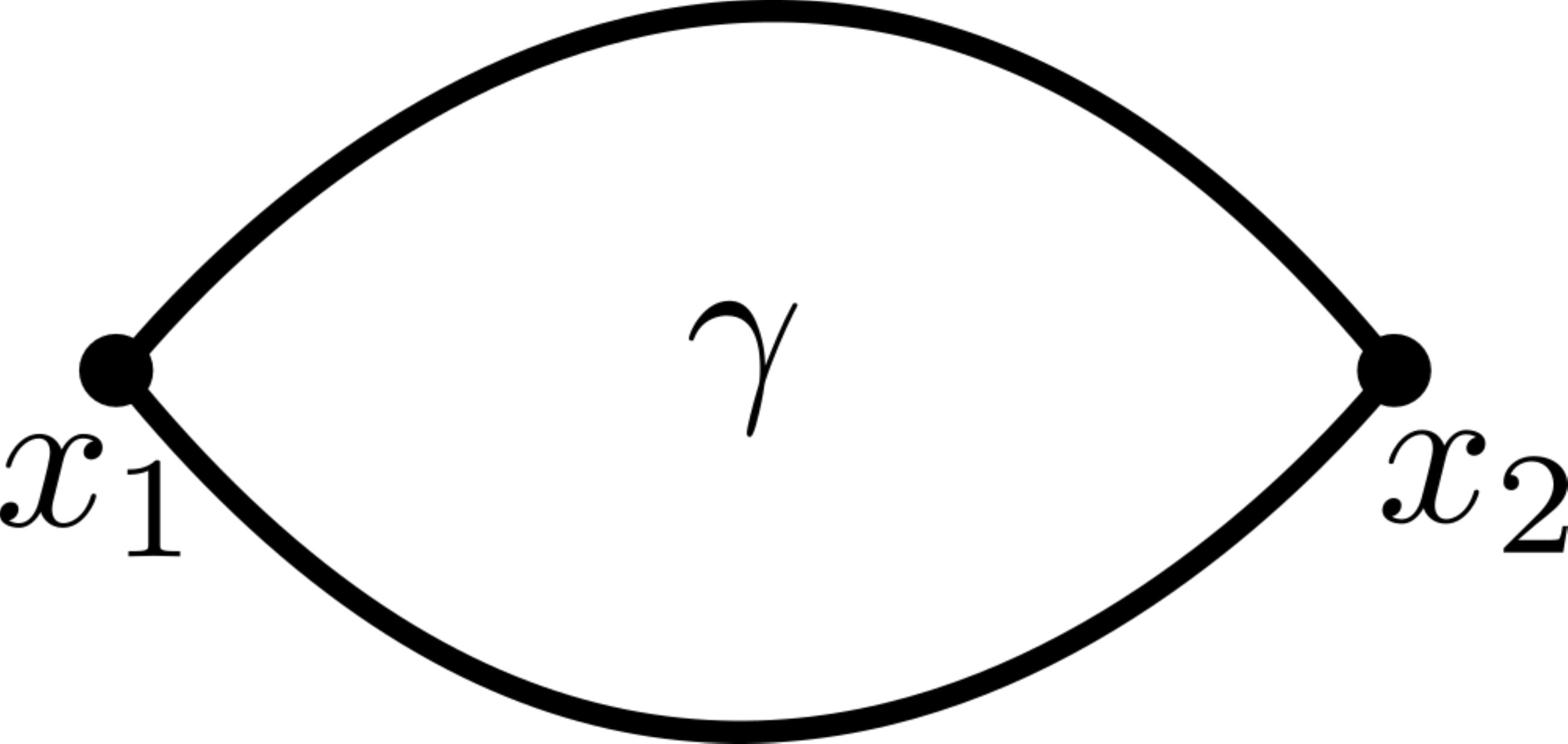}
		\caption{The renormalization part $\g$.}
		\label{fig:gamma}
	\end{subfigure}

	\begin{subfigure}[b]{0.95\textwidth}
		\includegraphics[width=\textwidth]{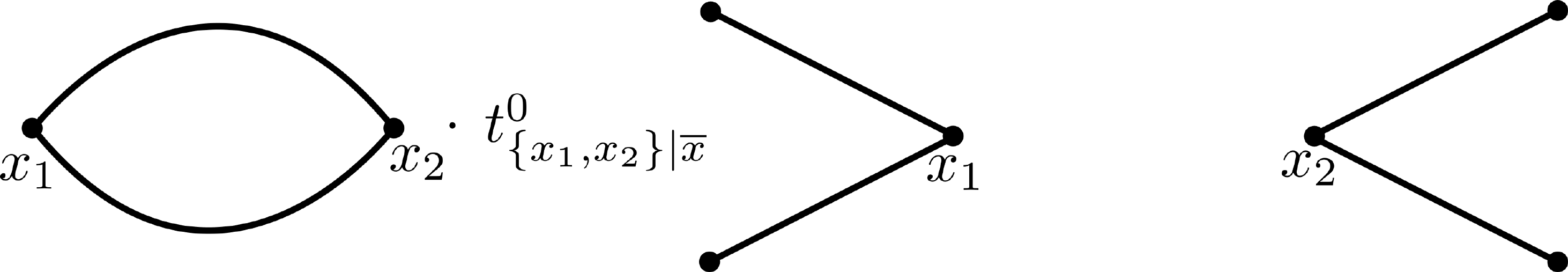}
		\caption{The application of $\IP(\g)$.}
		\label{fig:pretaylor}		
	\end{subfigure}

	\begin{subfigure}[b]{0.8\textwidth}
		\includegraphics[width=\textwidth]{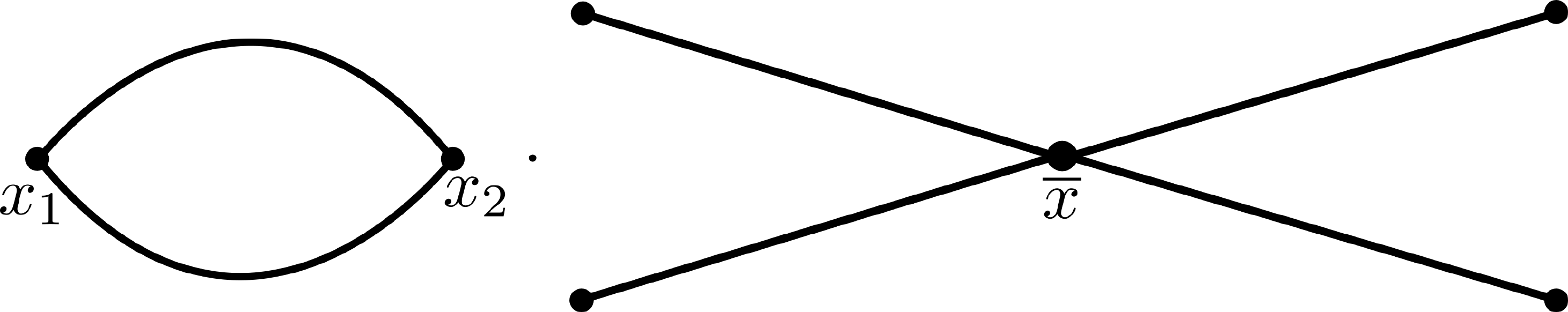}
		\caption{The application of the Taylor operator.}
		\label{fig:taylor}		
	\end{subfigure}
	\caption{Illustration of the action of the Taylor operator $t(\g)$ on a graph $\G$ with one renormalization part $\g$.}
	\label{fig:TaylorOperator}
\end{figure}
It turns out that the regularization shows the desired effect only if the graphs in question are subsumed in partially ordered set, called forests $F$, where the ordering is performed with respect to the the vertex sets $V(\g)$, $\g\subseteq\G$, and the usual set inclusion $\subseteq$. 
In particular, we take only those graphs $\g\subseteq\G$ into account, which are full vertex parts, i.e. the vertex set $V(\g)$ and all edges connecting these vertices.
With this, Bogoliubov's $R$-operation can be written in form of the well-known forest formula
\begin{align}\label{eq:forestformula}
    	Ru_0^\e[\G] \bydef \sum_{F\in\SF}\limits \prod_{\g\in F} (- t(\g)) u_0^\e[\G], 
\end{align} 
where $\SF$ is the set of all forests $F$. 
The Taylor operators are ordered in the sense that $t(\g)$ appears left of $t(\g')$ if $\g\supset \g'$, and no order is preferred if $\g\cap\g'=\emptyset$. 
Then it can be shown that $Ru_0^\e[\G]$ admits a unique extension $Ru^\e[\G]$, which converges to a well-defined distribution $Ru[\G]$ in the limit $\e\rightarrow 0$, so that the naive time-ordering \eqref{eq:NaiveTimeOrdering} and the $R$-operation \eqref{eq:forestformula} define a renormalization scheme \cite{Pottel:2017bb}, i.e. a redefinition of the time-ordered product.
The latter follows from the fact that each graph $\G$ in the expansion \eqref{eq:LocalWickOrdering} is a well-defined distribution in $\CD'(\Om^{|V(\G)|})$ and, thus, the finite sum over all graphs is well-defined as well.
Accordingly, we write
\begin{align}
	\CT_R \left\{ :\Ph_1(f_1): ... :\Ph_n(f_n): :\Ph(g_1):... :\Ph(g_m): \right\} = \sum_{\G\in\mathscr{G}} \langle R_\G u[\G] , f_1 \otimes ... \otimes f_n \otimes g_1 \otimes ... \otimes g_m \rangle \, ,
\end{align}
where $\mathscr{G}$ is the set of all graphs and $R_\G$ indicates that the $R$-operation is performed with respect to the graph $\G$.

In the subsequent sections we want to study products
\begin{align}\label{eq:naive_OPE_Product}
	:\Ph_1:(x_1)... :\Ph_n:(x_n) \, ,
\end{align}
with $:\Ph_j: \in \SCB(M,g)$, in the limit of coinciding arguments $x_j\rightarrow x$. 
Clearly, Wick ordering renders the product \eqref{eq:naive_OPE_Product} well-defined. 
But inserting \eqref{eq:naive_OPE_Product} into a time-ordered product and applying Wick's theorem \eqref{eq:LocalWickOrdering}, we may ask which additional divergences occur in the graph weight if we perform the limit of coinciding arguments. 
We observe that the graphs $\G$ are not necessarily connected so that we write
\begin{align}
	\G = \G_1 \cup ... \cup \G_c,
\end{align}
where all $\G_k$ are connected. 
Those graphs $\G_k$ may either contain $x_j$-vertices or $x_{n+i}$-vertices or both. 
Only the latter case is interesting for our purposes since graphs with only $x_{n+i}$-vertices do not contribute in the limit of coinciding arguments while graphs with only $x_j$-vertices are regularized by Wick ordering. 
However graphs with only $x_j$-vertices have to be taken into account if there exists at least one graph with both types of vertices in $\G$. 
The reason for this lies in the fact that graphs may join in the limit and new renormalization parts appear. 
Let us first look at a connected graph $\G_k$ with $x_j$- and $x_{n+i}$-vertices. 
In contrast to the momentum space scheme, there may exist renormalization parts in $\G_k$, which contain one or more $x_j$-vertices. 
Therefore three effects may occur in the limit of coinciding arguments. 
First, the limit introduces overlapping renormalization parts due to joining them in one vertex. 
Second, a subgraph $\g\subseteq\G_k$ increases the scaling degree, for which, denoting the graph with joined vertex by $\tilde \g$, we have
\begin{align}
	\uvd(u^\e_0[\g]) < \uvd(u^\e_0[\tilde \g]). 
\end{align}
For completeness, we introduce the splitting of vertices (the inverse of joining vertices), which we acts according to $\hat{\tilde\g} = \g$. 
Third, the limit introduces entirely new renormalization parts. 
Introducing (see Figure \ref{fig:gammatodelta})
\begin{align}
	\De \bydef \tilde\G \qquad \& \qquad \G \bydef \hat\De,
\end{align}
\begin{figure}[h]
	\centering
	\includegraphics[width=0.8\textwidth]{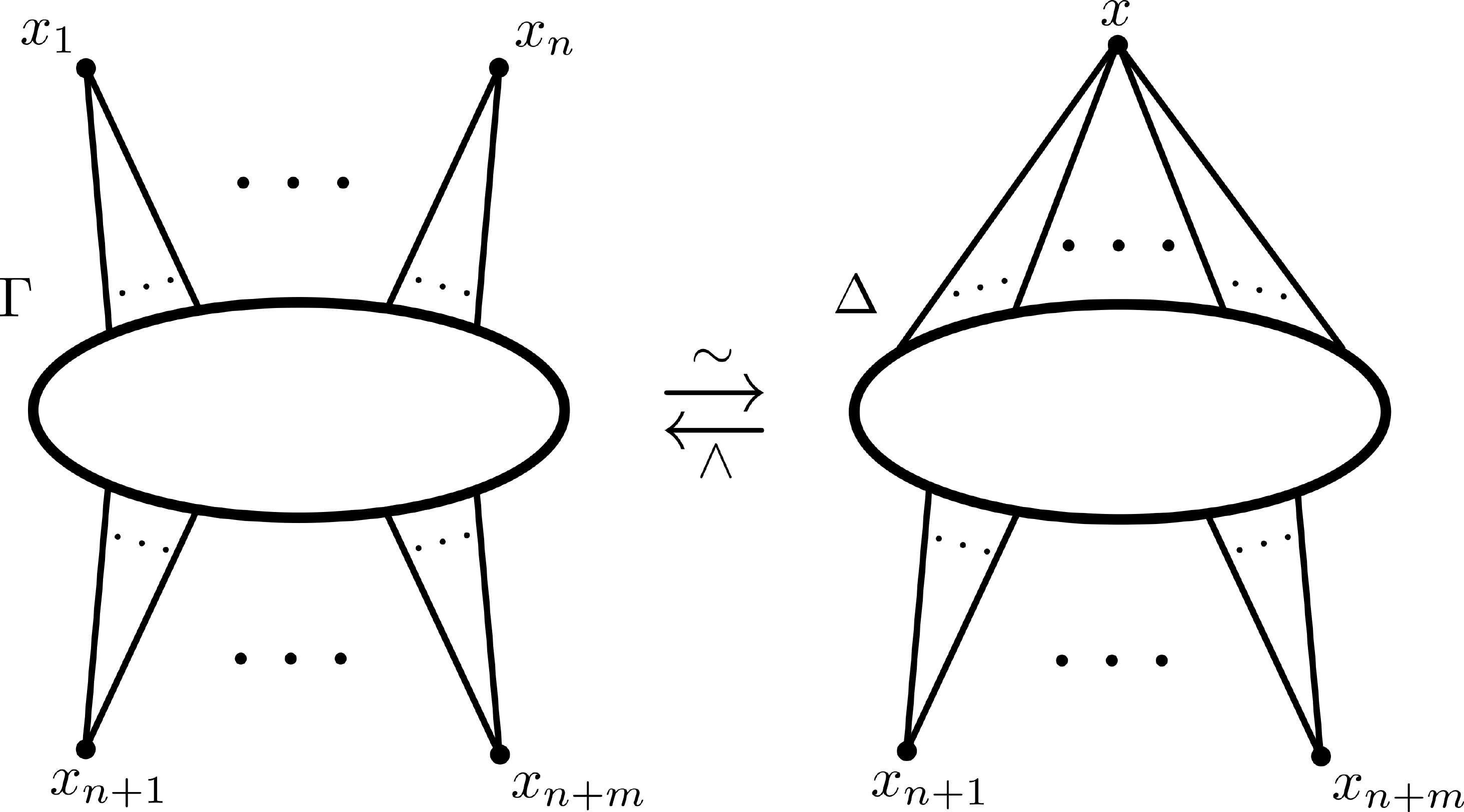}
	\caption{Illustration of the operations $\tilde \G$ and $\hat \De$. }
	\label{fig:gammatodelta}
\end{figure}
we claim that, due to the properties of the $R$-operation, effects of the limit can be traced back and controlled, i.e. 
\begin{align}
	R_\De u_0^\e[\G] & = R_\G u_0^\e[\G] + X u_0^\e[\G] \\
	& = R_\G u_0^\e[\G] + X_\G u_0^\e[\G] + X_\De u_0^\e[\G],
\end{align}
where the indices $\G$ and $\De$ denote the forest formula taken with respect to renormalization parts in $\G$ and $\De$, respectively. 
We distinguish corrections $X_\G$ which stem from overlapping renormalization parts in $\De$ and corrections $X_\De$ which stem from new renormalization parts or renormalization parts with increased scaling degree after the coincidence limit is performed.
Specifically, in the limit $x_j \rightarrow x$, $j\in \{1,...,n\}$, we find $\G\rightarrow\De$ and, thus,
\begin{align}
	R_\De u_0^\e[\De] = \lim_{x_j\rightarrow x} R_\De u_0^\e[\G] = R_\G u_0^\e[\De] + X_\G u_0^\e[\De] + X_\De u_0^\e[\De] \, ,
\end{align}
which is a well-defined expression according to \eqref{eq:forestformula}. 
However, we remark that the coincidence limit does not modify the $R$-operation and hence the two operations do not commute in general.
Therefore it is the objective of the subsequent sections to derive a definition of normal products
\begin{align}
	N_\de[\Ph_1(f_1)... \Ph_n(f_n)]
\end{align}
with degree $\de$ such that the coincidence limit 
\begin{align}\label{eq:NormalProductHypothesis}
	\lim_{f_j\rightarrow f} \CT_R \{ N_\de[\Ph_1(f_1)... \Ph_n(f_n)] :\Ph(g_{1}):... :\Ph(g_{m}): \}
\end{align}
results in a well-defined distribution, where the limit has to be understood in the sense that
\begin{align}
	f_1 \otimes ... \otimes f_n \rightarrow f \, .
\end{align}
Certainly, the definition will hold only for a single but arbitrary correlation function.
For an example of a normal product defined for a perturbative treatment of a theory with one scalar field and quartic self-interaction, we refer to \cite{Pottel:2017fwz}.

The possibility of finding renormalization parts with differing subtraction degree is of more general nature. 
Indeed, we observe that the $R$-operation still defines a renormalization method if we choose subtraction degrees $\de(\g)$ with $\de(\g)\geq d(\g)$ for some renormalization part $\g\subseteq\G$. 
It turns out that those choices are not completely arbitrary, but, more importantly, different choices can be related to each other by Zimmermann identities \cite{Zimmermann:1972te}. 
As indicated above, those identities are indispensable for the definition of normal products.

In the following we work with the analytic continuation of the metric $g^\e$, $\e>0$, and the non-extended distributions $u_0^\e[\G]$. 
The existence of the extended distribution $Ru[\G]$ is just a consequence of the BPHZ renormalization scheme. \cite{Pottel:2017fwz}

\section{Zimmermann Identity}

The possibility of oversubtraction admits almost arbitrary choices of rendering the time-ordered products
\begin{align}
	\CT^{(\mathrm{conn})} \left\{ \prod_{j=1}^n \Ph_j(f_j) \right\},
\end{align}
expanded in (connected) graphs $\G$, extendible, where we find two options. 
Namely, we may either assign degrees 
\begin{align}
	\de(\g) > d(\g),
\end{align}
to renormalization parts $\g\subset \G$, or assign degrees to the constituting Wick monomials, which is denoted by 
\begin{align}\label{eq:NormalProductMonomial}
	N_{\de_j}[\Ph_j(f_j)]
\end{align}
for $\de_j \geq \dim(\Ph_j)$, where $\dim(\Ph_j)$ is the dimension of elementary field operators as well as their covariant derivations, hence not the engineering dimension in general. 
We observe that \eqref{eq:NormalProductMonomial} is not in conflict with our working hypothesis \eqref{eq:NormalProductHypothesis} since it contains only one field monomial.
\begin{rmk}
	It is worth noting that the $R$-operation admits writing monomials $\Ph_j$ without explicitly performing the Wick ordering. 
	Namely, if there exists an edge $e\in E(\G)$ with $s(e)=t(e)$, then the edge weight is a renormalization part with one element in the vertex set. 
	Hence the subtraction point of the Taylor operator coincides with this vertex, i.e. $s(e)=t(e)=\oli x_e$, so that the line complement $u_0^\e[\G\lineco\g]$ vanishes for each configuration of $e$, i.e. for each $s(e)$.
\end{rmk}
We refer to the former choice of subtraction degrees as anisotropic and to the latter as isotropic \cite{Clark:1976ym}. 
In the following, we work with isotropic degrees, which can be translated into degrees for renormalization parts. 
For a Feynman graph $\g$, the degree of divergence is given by
\begin{align}
	\uvd(\g) = (\dim(M) - 2 \dim(\ph))|E(\g)| - \dim(M)(|V(\g)|-1) + |\SD_\g|,
\end{align}
where $\SD_\g$ is the set of covariant derivatives acting on fields contributing to $\g$. The number of edges $|E(\g)|$ is exactly half of the number of elementary fields $|\SE_\g|$ contributing to $\g$ and we define $V(\g) \bydef \SV_\g$ so that for $\dim(M)=4$ and $\dim(\ph)=1$ we obtain
\begin{align}
	\uvd(\g) = 4 + |\SE_\g| - 4|\SV_\g| + |\SD_\g| .
\end{align}
The sets $\SE_\g$ and $\SD_\g$ may be rewritten in terms of monomials $\Ph$ assigned to vertices of $\g$, i.e. 
\begin{align}
	|\SE_\g| + |\SD_\g| = \sum_{v\in\SV_\g} \dim(\Ph_v) - |\oli\SE_\g| - |\oli\SD_\g|,
\end{align}
where $\oli\SE_\g$ denotes the set of elementary fields contributing to external lines and $\oli\SD_\g$ denotes the set of covariant derivatives at fields contributing to external lines. Introducing the degrees of oversubtraction $\de_j \geq \dim(\Ph_j)$ and defining the \emph{codegree} of $\g$ to be $\oli d(\g) \bydef |\oli\SE_\g| + |\oli\SD_\g|$, the degree of divergence for oversubtractions is given by
\begin{align}
	\de(\g) = 4 + \sum_{v\in\SV_\g} (\de_v - 4) - \oli d(\g).
\end{align}
For mainly didactical purposes, we begin the derivation of relations among different choices of
\begin{align}
	\CT_{R}\left\{ \prod_{j=1}^n N_{\de_j}[\Ph_j(f_j)] \right\}
\end{align}
in the case of one connected graph $\De$ and a monomial $\Ph(f)$ to which we assign the vertex $V_0\in V(\De)$ and two degrees 
\begin{align}
	a>b\geq \dim(\Ph).
\end{align}
This is the analogue to the treatment in \cite{Zimmermann:1972te} but restricted to an intermediate step, since we neither consider the full perturbative expansion in a power series of a theory nor do we have reduction techniques in the sense of LSZ at our disposal. Also the sum over all contributions is left for a later stage of this section. 
\begin{lem}\label{le:ZI}
	Let $u_0^\e[\De]$ be the weight over a connected graph $\De$, which contains a distinguished vertex $V_0$ with $a>b\geq \dim(\Ph)$ assigned. The difference of the prescriptions under the $R$-operation is given by
	\begin{align}
		R^{(a)}_\De u_0^\e[\De] - R^{(b)}_\De u_0^\e[\De] = - \sum_{\ta\in\ST} \sum_{|\al|=b+1}^a \frac{1}{\al!} G_{\oli x_\ta,\al}[\ta] R^{(a)}_{\De/\ta} (D^\al_{\oli x_\ta} u_0^\e[\De/\ta]),
	\end{align}
	where $\ST$ is the set of renormalization parts containing $V_0$ and 
	\begin{align}
		G_{\oli x_\ta,\al}[\ta] \bydef \int (x-\oli x_\ta)^\al R^{(b)}_{\ta^\perp} u_0^\e[\ta](x) f(x) d\m_x,
	\end{align}
	where $f\in\CD(M^{|V(\ta)|-1})$ and $\ta^\perp$ denotes the set of normal $\ta$-forest, i.e. the set of all forests over $\ta$ which do not contain $\ta$.
\end{lem}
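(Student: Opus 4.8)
The plan is to exploit that the forest formula \eqref{eq:forestformula} produces the same index set of forests $\SF$ for both prescriptions, so that the entire difference between $R^{(a)}_\De$ and $R^{(b)}_\De$ is localized in the Taylor operators attached to those renormalization parts which contain the distinguished vertex $V_0$. The starting point is the elementary identity for the difference of two Taylor operators sharing a subtraction point but carrying different orders,
\[
	t^a_{x_\ta|\oli x_\ta} - t^b_{x_\ta|\oli x_\ta} = \sum_{|\al|=b+1}^a \frac{(x_\ta-\oli x_\ta)^\al}{\al!} D^\al_{\oli x_\ta},
\]
which isolates precisely the monomial weight $(x_\ta-\oli x_\ta)^\al$ and the derivative $D^\al_{\oli x_\ta}$ appearing on the right-hand side of the claimed identity.

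First I would observe that the renormalization parts containing the single vertex $V_0$ which occur in a fixed forest $F\in\SF$ are pairwise comparable under inclusion, hence form a nested chain $\ta_1\subset\dots\subset\ta_k$, with the associated Taylor operators ordered so that the largest part stands leftmost. Writing $A_i\bydef -t^a(\ta_i)$ and $B_i\bydef -t^b(\ta_i)$ for these operators, while all remaining factors of the forest coincide for the two prescriptions, I would apply the telescoping
\[
	A_k\cdots A_1 - B_k\cdots B_1 = \sum_{m=1}^{k}\left(A_k\cdots A_{m+1}\right)(A_m-B_m)\left(B_{m-1}\cdots B_1\right),
\]
which singles out one part $\ta\bydef\ta_m$ where the difference $A_m-B_m=-(t^a(\ta)-t^b(\ta))$ is extracted, leaving degree-$a$ operators on the parts outside $\ta$ and degree-$b$ operators on the parts inside $\ta$.

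The next step is to recognize the factorization of the forest sum once $\ta$ has been singled out. Since only full vertex parts enter \eqref{eq:forestformula}, the forests of $\De$ in which $\ta$ is distinguished split into forests over $\ta$ not containing $\ta$ itself---the normal $\ta$-forests $\ta^\perp$---and forests over the quotient graph $\De/\ta$, in which $\ta$ is collapsed to a single vertex at $\oli x_\ta$. Using $t(\ta)u_0^\e[\De]=u_0^\e[\ta]\,t^{d(\ta)}_{x_\ta|\oli x_\ta}u_0^\e[\De\lineco\ta]$, the monomial $(x_\ta-\oli x_\ta)^\al$ stays attached to the internal weight $u_0^\e[\ta]$, whereas $D^\al_{\oli x_\ta}$ acts on the line complement, which after collapse is the quotient weight $u_0^\e[\De/\ta]$. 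Summing the internal contributions at degree $b$ over $\ta^\perp$ and pairing with the test function produces the coefficient $G_{\oli x_\ta,\al}[\ta]$, while the external contributions at degree $a$ reassemble into $R^{(a)}_{\De/\ta}(D^\al_{\oli x_\ta}u_0^\e[\De/\ta])$. Finally, reorganizing the double sum over forests and chain positions $m$ into a single sum over the distinguished part $\ta\in\ST$ yields the asserted identity, the overall sign stemming from $A_m-B_m$.

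I expect the main obstacle to be the bookkeeping of this combinatorial reorganization: checking that the telescoping over nested chains together with the factorization of the forest sum reproduces exactly the set $\ST$ of all renormalization parts containing $V_0$, with internal renormalization restricted to the normal forests $\ta^\perp$ at degree $b$ and external renormalization carried out on $\De/\ta$ at degree $a$, neither double counting nor omitting contributions. A further technical point is to justify that the subtraction point $\oli x_\ta$, which depends on the configuration of the $\ta$-vertices, is consistently identified with the locus of the collapsed vertex in $\De/\ta$, so that $D^\al_{\oli x_\ta}$ and the integration defining $G_{\oli x_\ta,\al}[\ta]$ are compatible; this relies on the definition of $\oli x_\ta$ and on the fact that $\IP(\ta)$ isolates $u_0^\e[\ta]$ as a factor.
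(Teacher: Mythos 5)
Your proposal is correct and follows essentially the same route as the paper: your telescoping identity over the nested chain of $V_0$-containing parts is precisely the algebraic form of the paper's decomposition $t^{(a)}=t^{(b)}+(t^{(a)}-t^{(b)})$ followed by singling out the \emph{minimal} part carrying the difference, with degree $a$ reassembled outside and degree $b$ inside. The subsequent factorization into normal $\ta$-forests (yielding $G_{\oli x_\ta,\al}[\ta]$) and forests over $\De/\ta$ (yielding $R^{(a)}_{\De/\ta}$), including the identification of the subtraction point with the collapsed vertex, matches the paper's argument step for step.
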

\begin{proof}
We have to compare $R_\De^{(a)}u_0^\e[\De]$ to $R_\De^{(b)}u_0^\e[\De]$ and decompose the set of all $\De$-forest $\SF(\De)$ into the set of $\De$-forests $\SF_1(\De)$, where no element is containing a graph with vertex $V_0$, and the complementary set of $\De$-forests $\SF_0(\De)$, i.e.
\begin{align}
 	\SF(\De) \bydef \SF_1(\De) \cup \SF_0(\De).
\end{align} 
Indicating with $t^{(a)}(\g)$ and $t^{(b)}(\g)$ two choices of oversubtractions, respectively, we write
\begin{align}
	R^{(a)}_\De u_0^\e[\De] & = \sum_{F\in\SF(\De)} \prod_{\g\in F} (-t^{(a)}(\g)) u_0^\e[\De] \\
	& = \sum_{F\in\SF_1(\De)} \prod_{\g\in F} (-t(\g)) u_0^\e[\De] + \sum_{F\in\SF_0(\De)} \prod_{\g\in F} (-t^{(a)}(\g)) u_0^\e[\De] .
\end{align}
Since for any $F\in \SF_0(\De)$ there exists a $\g\in F$ containing $V_0$, thus $\g$ is subtracted by $t^{(a)}(\g)$, which can be rewritten as
\begin{align}\label{eq:ZITaylorDecomp}
	t^{(a)}(\g) = t^{(b)}(\g) + (t^{(a)}(\g) - t^{(b)}(\g)).
\end{align}
Expanding the previous sum \eqref{eq:ZITaylorDecomp} for all graphs containing $V_0$ leads to new forests in $\SF_0(\De)$, some of which consisting of only factors with degree $b$. We denote these forests by $\SF_b$ and find
\begin{align}
	R^{(b)}_\De u_0^\e[\De] = \sum_{F\in\SF_1(\De)} \prod_{\g\in F} (-t(\g)) u_0^\e[\De] + \sum_{F\in\SF_b(\De)} \prod_{\g\in F} (-t^{(b)}(\g)) u_0^\e[\De]. 
\end{align}
Then we compute the difference 
\begin{align}
	R^{(a)}_\De u_0^\e[\De] - R^{(b)}_\De u_0^\e[\De] = \sum_{F\in\SF_0(\De)\setminus\SF_b(\De)} \prod_{\g\in F} (-t^{(a)}(\g)) u_0^\e[\De] \bydef Xu_0^\e[\De].
\end{align}
By construction there are only forests contributing to $Xu_0^\e[\De]$ which contain at least one graph $\g\subseteq\De$ subtracted by $t^{(a)}(\g)-t^{(b)}(\g)$. 
Among those graphs, we choose the minimal graph $\ta\in F$ for some forest $F\in\SF_0(\De)\setminus\SF_b(\De)$. 
Then all graphs $\g'\subset \ta$, $\g'\in F$, are either subtracted by $t^{(b)}(\g')$ or $t(\g)$ and all graphs $\g\supset\ta$ are either subtracted by $t^{(b)}(\g)$ or $t^{(a)}(\g)-t^{(b)}(\g)$. 
But we may add up the latter to subtractions $t^{(a)}(\g)$. 
Furthermore every $\g\subseteq\De$ containing $V_0$ is at least once minimal and has $t^{(a)}(\g)-t^{(b)}(\g)$ assigned to it. 
Therefore we subsume those in a set $\ST$. For each $\ta\in\ST$, we know that graphs $\g\supset\ta$ are subtracted with degree $a$ and all graphs $\g'\subset\ta$ are subtracted with degree $b$. 
With respect to any $\ta$, we construct all superforests $\oli F_\ta$, i.e. all posets of renormalization parts $\g$ fulfilling $\g\supset\ta$, and all subforests $\ul F_\ta$, i.e. all posets of renormalization parts $\g'$ fulfilling $\g'\subset\ta$. 
Those forests may be subsumed in sets $\oli \SF_\ta$ and $\ul \SF_\ta$, respectively. 
With this we have
\begin{align}
	Xu_0^\e[\De] = \sum_{\ta\in\ST} \sum_{\oli F_\ta \in \oli \SF_\ta} \sum_{\ul F_\ta \in \ul \SF_\ta} \prod_{\g\in\oli F_\ta} (-t^{(a)}(\g)) (-(t^{(a)}(\ta)-t^{(b)}(\ta))) \prod_{\g'\in \ul F_\ta} (-t^{(b)}(\g')) u_0^\e[\De].
\end{align}
Spelling out the Taylor operator
\begin{align}
	t^{(a)}(\ta) - t^{(b)}(\ta) = \sum_{|\al|=b+1}^a \frac{(x-\oli x_\ta)^\al}{\al!} D^\al_{x_\ta | \oli x_\ta} \IP(\ta)
\end{align}
and writing
\begin{align}\label{eq:NoTaylorOutside}
	\sum_{\ul F_\ta \in \ul \SF_\ta} \prod_{\g'\in \ul F_\ta} (-t^{(b)}(\g')) u_0^\e[\De] = u_0^\e[\De\setminus\ta] R^{(b)}_{\ta^\perp} u_0^\e[\ta],
\end{align}
with $\ta^\perp$ denoting the set of all normal $\ta$-forests, we obtain
\begin{multline}
	-(t^{(a)}(\ta) - t^{(b)}(\ta)) \sum_{\ul F_\ta \in \ul \SF_\ta} \prod_{\g'\in \ul F_\ta} (-t^{(b)}(\g')) u_0^\e[\De] \\
	= - \sum_{|\al|=b+1}^a \frac{1}{\al!} (D^\al_{x_\ta | \oli x_\ta} u_0^\e[\De\setminus\ta]) (x-\oli x_\ta)^\al R^{(b)}_{\ta^\perp} u_0^\e[\ta].
\end{multline}
The argument for \eqref{eq:NoTaylorOutside} can be read off from
\begin{align}
	D^\al_{x_\ta | \oli x_\ta} \frac{(x-\oli x_{\g'})^\be}{\be!} D^\be_{x_{\g'}|\oli x_{\g'}} = 
	\begin{cases}
		0 & \mbox{for } \be \nsubseteq \al \\
		D^\al_{x_\ta | \oli x_\ta} & \mbox{otherwise}
	\end{cases}
\end{align}
when $\g'\subset\ta$ and at $V(\g')\cap V(\ta) \cap V(\De\lineco\ta)$. 
Note further that $(x-\oli x_\ta)^\al R^{(b)}_{\ta^\perp} u_0^\e[\ta]$ is locally integrable for every $\al$ since both $R^{(a)}_\De u_0^\e[\De]$ and $R^{(b)}_\De u_0^\e[\De]$ are locally integrable. 
Hence after integrating over all but one argument in $u_0^\e[\ta]$, we obtain a function $G_{\oli V,\al}[\ta]$, which is multiplied to
\begin{align}
	- \sum_{\oli F_\ta \in \oli \SF_\ta} \prod_{\g\in\oli F_\ta} (-t^{(a)}(\g)) \sum_{|\al|=b+1}^a \frac{1}{\al!} (D^\al_{x_\ta | \oli x_\ta} u_0^\e[\De\setminus\ta]).
\end{align}
Indeed, setting all vertices $V(\ta)\cap V(\De\lineco\ta)$ to $\oli x_\ta$ in the Taylor polynomial corresponds to contracting $\ta$ to a point, i.e. 
\begin{align}
	\De\setminus\ta \mapsto \De/\ta.
\end{align}
Hence we obtain
\begin{align}
	D^\al_{x_\ta | \oli x_\ta} u_0^\e[\De\setminus\ta] = D^\al_{\oli x_\ta} u_0^\e[\De/\ta],
\end{align}
which fits nicely to our observation above. 
It follows from the construction of the $R$-operation that the vertex at $\oli x_\ta$ is determined up to counterterms of degree $|\al|$ and no renormalization parts which are subgraphs to $\ta$ may take part in further subtractions. 
Furthermore the degrees of $\g\supset\ta$ remain the same and we may view $\oli F_\ta$ as a forest over $\De/\ta$ instead of $\De$ such that
\begin{align}
	\sum_{\oli F_\ta \in \oli \SF_\ta} \prod_{\g\in\oli F_\ta} (-t^{(a)}(\g)) = R^{(a)}_{\De/\ta}
\end{align}
and we conclude
\begin{align}
	Xu_0^\e[\De] = - \sum_{\ta\in\ST} \sum_{|\al|=b+1}^a \frac{1}{\al!} R^{(a)}_{\De/\ta} (D^\al_{\oli x_\ta} u_0^\e[\De/\ta]) G_{\oli x_\ta,\al}[\ta].
\end{align}
\end{proof}
We denote the resulting vertex after the contraction of $\ta$ by $\oli V \in V(\De / \ta)$ and observe that $\oli V$ behaves like a vertex in some part of a "new" correlation function. 
This leads to constraints on the number of derivatives depending on the number of incident lines in $\oli V$ after summing over all contributing graphs. 
Before performing the summation, we generalize Lemma \ref{le:ZI} to the case of two choices of subtraction degrees $\de^{(1)}$ and $\de^{(2)}$ for a connected weighted graph $\De.$ 
Both choices have to be sufficiently high but we do not require them to be strictly ordered and set 
\begin{align}
	\sum_{\de^{(2)}}^{\de^{(1)}} = - \sum_{\de^{(1)}}^{\de^{(2)}}
\end{align}
as convention whenever $\de^{(2)} > \de^{(1)}$.
\begin{prop}\label{pr:GZI}
	Let $u_0^\e[\De]$ be the weight over a connected Feynman graph $\De$. 
	For two choices $\de^{(1)}(\g)$ and $\de^{(2)}(\g)$ with respect to any $\g\subseteq\De$, the difference of the prescriptions under the $R$-operation is given by 
	\begin{multline}
		R^{(1)}_\De u_0^\e[\De] - R^{(2)}_\De u_0^\e[\De] = 
		\sum_{\{\ta_j\}}  (-1)^{|\{\ta_j\}|} \prod_j \sum_{|\al_j|=\de^{(2)}_{\ta_j}+1}^{\de^{(1)}_{\ta_j}} \frac{1}{\al_j!} R^{(1)}_{\De/\{\ta_j\}} D^\al_{\oli x_{\ta_j}} u_0^\e[\De / \{\ta_j\}] G_{\oli x_{\ta_j}, \al_j}[\ta_j],
	\end{multline}
	where $\{\ta_j\}$ is a set of mutually disjoint renormalization parts.
\end{prop}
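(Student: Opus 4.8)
The plan is to follow the strategy of the proof of Lemma~\ref{le:ZI}, but to allow several independent renormalization parts to carry different subtraction degrees simultaneously. The essential new feature is that the minimal graphs subtracted by a difference of Taylor operators need no longer lie on a single chain through one distinguished vertex, so that a whole set $\{\ta_j\}$ of them may occur in one forest. First I would write, for every $\g\subseteq\De$,
\begin{align}
	t^{(1)}(\g) = t^{(2)}(\g) + (t^{(1)}(\g)-t^{(2)}(\g)),
\end{align}
where the difference vanishes whenever $\de^{(1)}(\g)=\de^{(2)}(\g)$, and insert this into the forest formula \eqref{eq:forestformula} for $R^{(1)}_\De u_0^\e[\De]$. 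Expanding each product over a forest $F$ into the $2^{|F|}$ terms labelled by the subset $S\subseteq F$ of factors taken from the difference, the subset $S=\emptyset$ reproduces $R^{(2)}_\De u_0^\e[\De]$ exactly, so that the difference $R^{(1)}_\De u_0^\e[\De]-R^{(2)}_\De u_0^\e[\De]$ is the sum over all forests $F$ and all non-empty $S\subseteq F$ of the corresponding mixed products.

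Next I would organise this double sum by its minimal difference-subtracted graphs. Since $S$ is a subset of the forest $F$, it is itself a forest and hence its elements are pairwise nested or disjoint; consequently the minimal elements of $S$ are mutually disjoint, which is precisely the set $\{\ta_j\}$ appearing in the claim. For a fixed collection $\{\ta_j\}$ of mutually disjoint renormalization parts I would then split every contributing forest into the parts lying strictly below some $\ta_j$, the parts $\ta_j$ themselves, and the parts lying above or disjoint from all $\ta_j$. As in the passage leading to \eqref{eq:NoTaylorOutside}, the graphs $\g'\subset\ta_j$ are all subtracted with degree $\de^{(2)}$ and resum, for each $j$ separately because the $\ta_j$ are disjoint, to the normal-forest expression $R^{(2)}_{\ta_j^\perp}u_0^\e[\ta_j]$; integrating out all but one argument produces the structure function $G_{\oli x_{\ta_j},\al_j}[\ta_j]$. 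The graphs $\g\supset\ta_j$ (or disjoint from every $\ta_j$) carry, by the same combination of $t^{(2)}$ and the difference into $t^{(1)}$ used in Lemma~\ref{le:ZI}, degree $\de^{(1)}$.

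It then remains to perform the contraction argument of Lemma~\ref{le:ZI} simultaneously for all $\ta_j$: setting the vertices $V(\ta_j)\cap V(\De\lineco\ta_j)$ to $\oli x_{\ta_j}$ in each Taylor remainder replaces $\De\setminus\{\ta_j\}$ by $\De/\{\ta_j\}$ and turns $D^{\al_j}_{x_{\ta_j}|\oli x_{\ta_j}}$ into $D^{\al_j}_{\oli x_{\ta_j}}$, whereupon the superforests over the disjoint $\ta_j$ may be viewed as forests over $\De/\{\ta_j\}$ and resum to $R^{(1)}_{\De/\{\ta_j\}}$. Collecting the $|\{\ta_j\}|$ difference operators yields the sign $(-1)^{|\{\ta_j\}|}$ and the product $\prod_j$ over the disjoint factors, which is exactly the stated formula; the convention $\sum_{\de^{(2)}}^{\de^{(1)}}=-\sum_{\de^{(1)}}^{\de^{(2)}}$ absorbs the sign on those $\ta_j$ for which $\de^{(2)}_{\ta_j}>\de^{(1)}_{\ta_j}$, so no ordering of the two degree assignments is needed. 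The main obstacle I expect is the bookkeeping of this simultaneous resummation, namely verifying that the contributions below the various $\ta_j$ factorise cleanly into the independent $G_{\oli x_{\ta_j},\al_j}[\ta_j]$ while those above genuinely reassemble into the single $R^{(1)}_{\De/\{\ta_j\}}$; here one must use that distinct minimal parts are disjoint, so that their below-regions and the associated normal forests do not interfere, together with the local integrability of each $(x-\oli x_{\ta_j})^{\al_j}R^{(2)}_{\ta_j^\perp}u_0^\e[\ta_j]$ inherited, as in Lemma~\ref{le:ZI}, from that of $R^{(1)}_\De u_0^\e[\De]$ and $R^{(2)}_\De u_0^\e[\De]$.
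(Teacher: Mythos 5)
Your proposal follows essentially the same route as the paper's proof: the decomposition $t^{(1)}(\g)=t^{(2)}(\g)+(t^{(1)}(\g)-t^{(2)}(\g))$ inserted into the forest formula, identification of the all-$t^{(2)}$ term with $R^{(2)}_\De u_0^\e[\De]$, organization of the remainder by the mutually disjoint minimal difference-subtracted parts $\{\ta_j\}$, resummation of the subforests into $\prod_j R^{(2)}_{\ta_j^\perp}u_0^\e[\ta_j]$ and of the superforests (after contraction) into $R^{(1)}_{\De/\{\ta_j\}}$, and integration to produce the functions $G_{\oli x_{\ta_j},\al_j}[\ta_j]$. The argument is correct and matches the paper's proof in all essentials, with your subset-$S$ bookkeeping being a slightly more explicit phrasing of the same combinatorics.
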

\begin{proof}
	In analogy to the proof of Lemma \ref{le:ZI} and \cite{Clark:1976ym}, we write
\begin{align}
	R^{(1)}_\De u_0^\e[\De] & = \sum_{F\in\SF} \prod_{\g\in F} (-t_1(\g)) u_0^\e[\De] \\
	& = \sum_{F\in\SF} \prod_{\g\in F} (-t_2(\g) - (t_1(\g)-t_2(\g))) u_0^\e[\De],
\end{align}
where we used
\begin{align}\label{eq:ExpansionChoiceOneTwo}
	t_1(\g) = t_2(\g) + (t_1(\g) - t_2(\g)).
\end{align}
Expanding the forest formula in $t_2(\g)$ and $(t_1(\g)-t_2(\g))$, there is always exactly one contribution per forest $F\in\SF$ which selects only operators $t_2(\g)$. 
Summing those contributions up results in $R^{(2)}_\De u_0^\e[\De]$ so that we may focus on the terms separating the two choices 
\begin{align}
	R^{(1)}_\De u_0^\e[\De] - R^{(2)}_\De u_0^\e[\De] \bydef X_{12}u_0^\e[\De] .
\end{align}
In any forest $F$ of $X_{12}u_0^\e[\De]$, we can find at least one minimal renormalization part $\ta_j$ being assigned to the difference of the Taylor operators, since there may exist mutually disjoint, minimal renormalization parts. 
Due to the expansion of \eqref{eq:ExpansionChoiceOneTwo}, every element of $F$ is at least once minimal in that sense. 
Therefore we sum over all possible subsets $\{\ta_j\}\subset F$ of mutually disjoint, Taylor-difference minimal $\ta_j$. 
Using the notation from above, we obtain
\begin{align}
	X_{12}u_0^\e[\De] = \sum_{F\in F_{12}} \sum_{\{\ta_j\} \subset F} \prod_{\g\in \oli F_{\{\ta_j\}}} (-t_1(\g)) \prod_j (-(t_1(\ta_j) - t_2(\ta_j))) \prod_{\g'\in \ul F_{\{\ta_j\}}} (-t_2(\g')) u_0^\e[\De].
\end{align}
We may split
\begin{align}
	u_0^\e[\De] = u_0^\e[\De\setminus\{\ta_j\}] \prod_j u_0^\e[\ta_j]
\end{align}
and interchange the sum over all forests with the sum over $\{\ta_j\}$
\begin{align}
	\sum_{F\in\SF} \sum_{\{\ta_j\}} = \sum_{\{\ta_j\}} \sum_{\oli F_{\{\ta_j\}} \in \oli\SF_{\{\ta_j\}}}  \sum_{\ul F_{\{\ta_j\}} \in \ul\SF_{\{\ta_j\}}} 
\end{align}
using that $\{\ta_j\}$ can be any set of mutually disjoint renormalization parts in $\De$ so that, by the same argument as above, we abbreviate
\begin{align}
	\sum_{\ul F_{\{\ta_j\}} \in \ul\SF_{\{\ta_j\}}} \prod_{\g'\in \ul F_{\{\ta_j\}}} (-t_2(\g')) u_0^\e[\De] = u_0^\e[\De\setminus \{\ta_j\}] \prod_j R^{(2)}_{\ta^\perp_j} u_0^\e[\ta_j].
\end{align}
Further, we spell out
\begin{align}
	\prod_j (-(t_1(\ta_j) - t_2(\ta_j))) u_0^\e[\De\setminus \{\ta_j\}] = (-1)^{|\{\ta_j\}|} \prod_j \sum_{|\al_j|=\de^{(2)}_{\ta_j}+1}^{\de^{(1)}_{\ta_j}} \frac{(x - \oli x_{\ta_j})^{\al_j}}{\al_j!} D^\al_{x_{\ta_j}|\oli x_{\ta_j}} u_0^\e[\De\setminus \{\ta_j\}],
\end{align}
where again
\begin{align}
	\prod_j D^\al_{x_{\ta_j}|\oli x_{\ta_j}} u_0^\e[\De\setminus \{\ta_j\}] = \prod_j D^\al_{\oli x_{\ta_j}} u_0^\e[\De / \{\ta_j\}].
\end{align}
In a last step, we observe that due to the properties of the $R$-operation
\begin{align}
	\sum_{\oli F_{\{\ta_j\}} \in \oli\SF_{\{\ta_j\}}} \prod_{\g\in \oli F_{\{\ta_j\}}} (-t_1(\g)) u_0^\e[\De / \{\ta_j\}] & = \sum_{F\in \SF(\De/\{\ta_j\})} \prod_{\oli \g\in F} (-t_1(\oli \g)) u_0^\e[\De / \{\ta_j\}] \\
	& = R^{(1)}_{\De/\{\ta_j\}} u_0^\e[\De / \{\ta_j\}]
\end{align}
and conclude
\begin{align}\label{eq:GZIGraph}
	X_{12}u_0^\e[\De] = \sum_{\{\ta_j\}}  (-1)^{|\{\ta_j\}|} \prod_j \sum_{|\al_j|=\de^{(2)}_{\ta_j}+1}^{\de^{(1)}_{\ta_j}} \frac{1}{\al_j!} R^{(1)}_{\De/\{\ta_j\}} D^\al_{\oli x_{\ta_j}} u_0^\e[\De / \{\ta_j\}] (x - \oli x_{\ta_j})^{\al_j} R^{(2)}_{\ta^\perp_j} u_0^\e[\ta_j] .
\end{align}
Integrating out all but one variables in each $(x - \oli x_{\ta_j})^{\al_j} R^{(2)}_{\ta^\perp_j} u_0^\e[\ta_j]$, we arrive at the assertion.
\end{proof}
We already indicated above that the vertices $\oli V_j$, resulting from contractions of renormalization parts $\ta_j$, shall be represented by monomials in time-ordered products when summing over all graphs. 
The degree assigned to those monomials is determined by the scheme, but the number of derivatives, remaining from the Taylor operators, is constrained by the dimension of incident lines in $\oli V_j$. 
Namely, the monomial of vertex $\oli V_j$ has to have at least the degree of the number of incident lines $\oli\SE_{\ta_j}$, multiplied by the degree of the elementary field which is trivial in our case, and the number of covariant derivatives $\oli\SD_{\ta_j}$, which are applied to those incident lines. 
Additionally, we have to take the subtraction degree $\max\{\de^{(1)}_{\ta_j},\de^{(2)}_{\ta_j}\}$ into account so that
\begin{align}
	\de_{\oli V_j} & = |\max\{\de^{(1)}_{\ta_j},\de^{(2)}_{\ta_j}\}| + |\oli\SE_{\ta_j}| + |\oli\SD_{\ta_j}|.
\end{align}
In order to conclude this section, we need to return to the full time-ordered product. 
Therefore we consider the sum over all graphs, which we obtained by Wick's theorem. In the spirit of \cite{Clark:1976ym}, we identify for any $\ta_j\in \{\ta_j\}$ its vertex set 
\begin{align}
	V(\ta_j) \bydef \SV_j\subseteq \{1,...,n\},
\end{align}
where $\SV_j\cap \SV_i = \emptyset$ since $\ta_j\cap\ta_i=\emptyset$ for $\ta_j,\ta_i\in\{\ta_j\}$. 
For each element $v\in V(\De)$, we find a certain set of elementary fields being associated to lines of $\De$. This set is independent of derivatives, curvature terms or constants of the theory. 
We denote it by 
\begin{align}
	\{v_1,...,v_k\}
\end{align}
such that we can define the set of all elementary fields 
\begin{align}
	\SE_j \subseteq \{ \{v_{i_{11}},...,v_{i_{1j_1}}\},...,\{v_{i_{n1}},...,v_{i_{nj_n}}\} \}
\end{align}
constructing $\ta_j$. 
Note that the pair $(\SV_j,\SE_j)$ does not determine $\ta_j$ uniquely in general. 
\begin{thm}\label{th:GZI}
	Let $\de^{(1)}$ and $\de^{(2)}$ be two sets of subtraction degrees for a connected time-ordered product 
	\begin{align}
		\CT^\mathrm{conn}_{R^{(j)}} \left\{ \prod_{i=1}^n N_{\de^{(j)}_i}[\Ph_i(f_i)]  \right\}.
	\end{align}
	Their difference is given by 
	\begin{multline}
		\CT^\mathrm{conn}_{R^{(1)}} \left\{ \prod_{i=1}^n N_{\de^{(1)}_i}[\Ph_i(f_i)]  \right\} - \CT^\mathrm{conn}_{R^{(2)}} \left\{ \prod_{i=1}^n N_{\de^{(2)}_i}[\Ph_i(f_i)]  \right\} \\
		= \sum_{c\geq 1} \sum_{\{\SV_i\}_c} \sum_{\{\SE_i\}_c} (-1)^c \sum_{\al} \frac{1}{\al!} \CT^\mathrm{conn}_{R}\left\{ \prod_{k\in \oli{\{\SV_i\}}_c} N_{\de^{(1)}_k}[\Ph_k(f_k)] \prod_{\SV_l\in\{\SV_i\}_c} N_{\de^{(1)}_{\oli V_l}} [\SCL_{l,\al_l} D^{\al_l}_{\oli x_{\ta_l}} \oli{\SE_l} (\oli f_l)] \right\},
	\end{multline} 
	with
	\begin{align}
		\min\{\de^{(1)}_{\oli V_j}, \de^{(2)}_{\oli V_j}\} < |\oli\SE_j| + |\oli\SD_j| + |\al_j| \leq \max\{\de^{(1)}_{\oli V_j}, \de^{(2)}_{\oli V_j}\}
	\end{align}
	and 
	\begin{align}
		\SCL_{i,\al_i} = \CT^\mathrm{conn}_{R} \left\{ \prod_{l'\in \SV_i} (x-\oli x_{\SV_i})^{\al_{i}} N_{\de^{(2)}_{l'}} [\Ph_{l'}/\oli\SE_i(f_l)] \right\},
	\end{align}
	where
	\begin{align}
		\Ph_l/\oli\SE_j \bydef \prod_{(l,l')\in \oli{\SE}_j} \frac{\de}{\de (\na...\na)_{ll'}\ph(y_{ll'})} \Ph_l(x_l)
	\end{align}
	and $(l,l')$ denote the $l$-th vertex in $\SV_j$ and the $l'$-th elementary field $\ph_{ll'}\in\oli\SE_j$ at the $l$-th vertex, which carries covariant derivatives $(\na...\na)_{ll'}$.
\end{thm}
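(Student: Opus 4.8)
The plan is to lift the single-graph identity of Proposition~\ref{pr:GZI} to the full connected time-ordered product by expanding both prescriptions with the local Wick theorem~\eqref{eq:LocalWickOrdering} and resumming the resulting graphs into connected time-ordered products. First I would expand $\CT^\mathrm{conn}_{R^{(1)}}$ and $\CT^\mathrm{conn}_{R^{(2)}}$ into sums over connected graphs $\De$ and apply the per-graph result~\eqref{eq:GZIGraph} to each difference $X_{12}u_0^\e[\De]$. This reduces the claim to resumming
\[
	\sum_{\De} \sum_{\{\ta_j\}} (-1)^{|\{\ta_j\}|} \prod_j \sum_{|\al_j|=\de^{(2)}_{\ta_j}+1}^{\de^{(1)}_{\ta_j}} \frac{1}{\al_j!} R^{(1)}_{\De/\{\ta_j\}} D^{\al_j}_{\oli x_{\ta_j}} u_0^\e[\De/\{\ta_j\}]\, (x-\oli x_{\ta_j})^{\al_j} R^{(2)}_{\ta_j^\perp} u_0^\e[\ta_j]
\]
over all graphs $\De$ produced by Wick's theorem.

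Next I would interchange the sum over $\De$ with the sum over sets of mutually disjoint renormalization parts $\{\ta_j\}$ and reorganize the outer sum according to the combinatorial data introduced before the statement: the number $c=|\{\ta_j\}|$, the mutually disjoint vertex sets $\{\SV_i\}_c$, and the elementary-field sets $\{\SE_i\}_c$ constituting the $\ta_j$. Because $(\SV_j,\SE_j)$ does not fix $\ta_j$, the graph sum factorizes into an inside sum over all $\ta_j$ compatible with $(\SV_j,\SE_j)$ and an outside sum over the contracted graphs $\De/\{\ta_j\}$, the two becoming independent once the external legs $\oli\SE_j$ linking $\ta_j$ to its complement are fixed.

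Then I would resum each factor separately. The inside sum $\sum_{\ta_j} R^{(2)}_{\ta_j^\perp} u_0^\e[\ta_j]$, carried out over all graphs on the vertices $\SV_j$ whose elementary fields constitute $\SE_j$ and whose legs $\oli\SE_j$ are amputated, reassembles by~\eqref{eq:LocalWickOrdering} into the connected subsystem time-ordered product $\SCL_{l,\al_l}$ of the amputated monomials $\Ph_{l'}/\oli\SE_j$. The outside sum, with the contracted vertex $\oli V_l$ promoted to a Wick monomial carrying the derivatives $D^{\al_l}_{\oli x_{\ta_l}}$ and the legs $\oli\SE_l(\oli f_l)$, reassembles into $\CT^\mathrm{conn}_R$ of the product of normal products, with the untouched vertices $\oli{\{\SV_i\}}_c$ keeping their degrees and $\oli V_l$ inserted at degree $\de^{(1)}_{\oli V_l}=\de^{(1)}_{\ta_l}+|\oli\SE_l|+|\oli\SD_l|$. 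Subtracting this codegree from the summation range $|\al_j|=\de^{(2)}_{\ta_j}+1,\dots,\de^{(1)}_{\ta_j}$ yields precisely the stated constraint $\min\{\de^{(1)}_{\oli V_j},\de^{(2)}_{\oli V_j}\}<|\oli\SE_j|+|\oli\SD_j|+|\al_j|\leq\max\{\de^{(1)}_{\oli V_j},\de^{(2)}_{\oli V_j}\}$.

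The main obstacle is the factorization and resummation of the middle two steps. One must verify that, for fixed $(\SV_j,\SE_j)$, the contractions internal to $\ta_j$ (producing $u_0^\e[\ta_j]$) and those linking $\ta_j$ to its complement (producing the legs $\oli\SE_j$) combine exactly so that~\eqref{eq:LocalWickOrdering} reproduces $\SCL_{l,\al_l}$ with the correct amputated monomials, and that summing the outside graphs with $\oli V_l$ treated as a genuine vertex reproduces the full connected time-ordered product. Equally one must check that no double counting arises from the non-uniqueness of $\ta_j$ given $(\SV_j,\SE_j)$; this non-uniqueness is precisely what the sum defining $\SCL_{l,\al_l}$ absorbs.
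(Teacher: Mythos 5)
Your proposal follows essentially the same route as the paper's proof: applying Proposition~\ref{pr:GZI} graph by graph, reorganizing $\sum_\De\sum_{\{\ta_j\}}$ into sums over $c$, $\{\SV_i\}_c$, $\{\SE_i\}_c$, the realizations $\{\ta_i\}_c$ and the contracted graphs $\oli\De$, factorizing into the inner sums that reassemble into $\SCL_{l,\al_l}$ and the outer sum that reassembles into the connected time-ordered product, and reading off the constraint on $|\al_j|$ from the codegree $|\oli\SE_j|+|\oli\SD_j|$. The potential obstacles you flag (factorization, non-uniqueness of $\ta_j$ given $(\SV_j,\SE_j)$) are exactly the points the paper resolves by the same interchange-of-sums argument, so your plan is sound and matches the paper's proof.
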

\begin{proof}
Setting $|\{\ta_j\}_c| = c$ and $\De/\{\ta_j\}_c \bydef \oli \De$, we have
\begin{align}
	\sum_\De \sum_{\{\ta_j\}} = \sum_{c\geq 1} \sum_{\{\SV_i\}_c} \sum_{\{\SE_i\}_c} \sum_{\{\ta_i\}_c} \sum_{\oli \De}
\end{align}
and observe that $\oli\De$ is constructed by the complementary vertex set $\oli{\{\SV_j\}}_c$ as well as vertices $\oli V_1,...,\oli V_c$ resulting from contractions of $\{\ta_j\}_c$ and lines among elements in the complementary set of elementary fields $\oli{\{\SE_j\}}_c$. 
With this, $\oli\De$ is independent of the realization $\{\ta_j\}_c$ and we associate to elements in $\oli{\{\SV_j\}}_c$ the monomials
\begin{align}
	N_{\de_i^{(1)}}[\Ph_i(f_i)]
\end{align}
and to vertices $\oli V_j$ the monomials
\begin{align}
	N_{\de_{\oli V_j}^{(1)}}[D^{\al_j} \oli{\SE_j} (\oli f_j)].
\end{align}
In the same manner, we view any $\ta_j$ as independent contribution to
\begin{align}
	\prod_{l\in\SV_j} N_{\de^{(2)}_l} [\Ph_l / \oli \SE_j (f_l)].
\end{align}
Then we obtain for the sum over all connected graphs $\De$ in \eqref{eq:GZIGraph}
\begin{multline}
	\sum_\De X_{12}u_0^\e[\De] \\ 
	= \sum_{c\geq 1} \sum_{\{\SV_i\}_c} \sum_{\{\SE_i\}_c} (-1)^c \sum_{|\al|=\de^{(2)}+1}^{\de^{(1)}} \frac{1}{\al!} \CT^\mathrm{conn}_{R}\left\{ \prod_{k\in \oli{\{\SV_i\}}_c} N_{\de^{(1)}_k}[\Ph_k(f_k)] \prod_{\SV_l\in\{\SV_i\}_c} N_{\de^{(1)}_{\oli V_l}} [D^{\al_l} \oli{\SE_l} (\oli f_l)] \right\} \times \\
	\times \prod_{i=1}^c \underbrace{\CT^\mathrm{conn}_{R} \left\{ \prod_{l'\in \SV_i} (x-\oli x_{\SV_i})^{\al_{i}} N_{\de^{(2)}_{l'}} [\Ph_{l'}/\oli\SE_i(f_l)] \right\} }_{\bydef \SCL_{i,\al_i} (\oli f_i)},
\end{multline}
where $\SCL_{i,\al_i} (\oli f_i)$ is a sufficiently regular function in the argument $\oli x_{\SV_i}$, which we may assign to the monomial $N_{\de^{(1)}_{\oli V_i}} [D^{\al_i} \oli{\SE_i} (\oli f_i)]$. In order to conclude the proof of the theorem, we observe that 
\begin{align}
	\dim(\SCL_{l,\al_l} \oli\SE_l(\oli f_l)) = |\oli\SE_l| + |\oli\SD_l| 
\end{align} 
so that the order $\al_l$ of possible Taylor subtractions is restricted by
\begin{align}
	\min\{\de^{(1)}_{\oli V_l}, \de^{(2)}_{\oli V_l}\} < |\oli\SE_l| + |\oli\SD_l| + |\al_l| \leq \max\{\de^{(1)}_{\oli V_l}, \de^{(2)}_{\oli V_l}\}.
\end{align}
\end{proof}
\begin{rmk}
Note that the correction terms are again expressed in local field monomials and the transition to time-ordered products without the connectedness condition is defined recursively. In particular, all quantities are well-defined distributions over the whole domain.
\end{rmk}
Using the derivation of Theorem \ref{th:GZI}, we may rewrite Proposition \ref{le:ZI} in terms of time-ordered products.
\begin{cor}\label{co:ZI}
	The change of subtraction degree for one distinguished monomial $N_{\de}[\Ph(f)]$, with $\de\in\{a,b\}$ and $a>b\geq\dim(\Ph)$, in a time-ordered product 
	\begin{align}
		\CT^\mathrm{conn}_{R} \left\{ N_{\de}[\Ph(f)] \prod_{i=1}^n \Ph_i(f_i) \right\}
	\end{align}
	is given by 
	\begin{multline}
		 \CT^\mathrm{conn}_{R} \left\{ N_{a}[\Ph(f)] \prod_{i=1}^n \Ph_i(f_i)  \right\} = \CT^\mathrm{conn}_{R} \left\{ N_{b}[\Ph(f)] \prod_{i=1}^n \Ph_i(f_i) \right\} \\
		- \sum_\SV \sum_\SE \sum_{\al} \frac{1}{\al!} \CT^\mathrm{conn}_{R} \left\{ \prod_{k\in\oli\SV} \Ph_k(f_k) N_a[D^\al \oli\SE(\oli f)] \right\} \times \\
		\times \CT^\mathrm{conn}_{R}\left\{ \sum_{\al'\cup\al''=\al} (x_0-\oli x_\SV)^{\al'} N_b[\Ph/\oli\SE(f)] \prod_{l\in\SV\setminus\{V_0\}} (x_l-\oli x_\SV)^{\al''} (\Ph_l/\oli\SE)(f_l)]\right\}
	\end{multline}
	with
	\begin{align}
		b < |\oli\SE| + |\oli\SD| + |\al| \leq a.
	\end{align}
\end{cor}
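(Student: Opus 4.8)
The plan is to derive this statement directly from the graph-level identity of Lemma~\ref{le:ZI} by summing over all connected graphs~$\De$, in complete analogy to the passage from Proposition~\ref{pr:GZI} to Theorem~\ref{th:GZI}. The decisive simplification here is that only the single distinguished vertex $V_0$ carries a change of degree, so every renormalization part $\ta\in\ST$ contributing to the difference must contain $V_0$. Since any two such parts share $V_0$, they can never be mutually disjoint; hence in the language of Theorem~\ref{th:GZI} only the case $c=1$ survives, and the outer collection of parts collapses to a single sum over the vertex set $\SV=V(\ta)$ and the elementary-field set $\SE$ of one contracted part.

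First I would take the output of Lemma~\ref{le:ZI},
\begin{align}
	Xu_0^\e[\De] = -\sum_{\ta\in\ST}\sum_{|\al|=b+1}^{a}\frac{1}{\al!}\,R^{(a)}_{\De/\ta}\bigl(D^\al_{\oli x_\ta}u_0^\e[\De/\ta]\bigr)\,G_{\oli x_\ta,\al}[\ta],
\end{align}
and sum it over all connected~$\De$. Reorganizing $\sum_\De\sum_{\ta\in\ST}$ as a sum over the combinatorial data $(\SV,\SE)$ of~$\ta$ followed by a sum over the admissible contractions~$\De/\ta$, exactly as in the proof of Theorem~\ref{th:GZI}, I would identify the two resummed factors. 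Summed over all~$\De$, the contracted weight $R^{(a)}_{\De/\ta}(D^\al_{\oli x_\ta}u_0^\e[\De/\ta])$ is read off by Wick's theorem in reverse as $\CT^\mathrm{conn}_{R}\{\prod_{k\in\oli\SV}\Ph_k(f_k)\,N_a[D^\al\oli\SE(\oli f)]\}$, with the contracted vertex $\oli V$ carrying the external lines $\oli\SE$ of~$\ta$ decorated by the surviving derivatives and inheriting the degree~$a$. Summed over all~$\ta$ compatible with $(\SV,\SE)$, the integrated weight $G_{\oli x_\ta,\al}[\ta]$ resums into the factor~$\SCL$, in which the distinguished field becomes $N_b[\Ph/\oli\SE(f)]$ and the remaining fields become $(\Ph_l/\oli\SE)(f_l)$, all subtracted with degree~$b$; the functional-derivative notation $\Ph/\oli\SE$ of Theorem~\ref{th:GZI} strips exactly those elementary fields that have been promoted to external lines.

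The remaining structural point is the distribution of the single Taylor monomial across the vertices of~$\SV$. Because $\oli x_\ta=\oli x_\SV$ is the common expansion point and $(x-\oli x_\ta)^\al$ factorizes over the vertex loci, the Leibniz rule splits the total multi-index into a part~$\al'$ carried by the distinguished vertex~$V_0$ and a part~$\al''$ distributed over $\SV\setminus\{V_0\}$, producing the sum $\sum_{\al'\cup\al''=\al}$ together with the monomials $(x_0-\oli x_\SV)^{\al'}$ and $(x_l-\oli x_\SV)^{\al''}$. The degree constraint $b<|\oli\SE|+|\oli\SD|+|\al|\leq a$ then follows at once from the summation range $|\al|=b+1,\dots,a$ of Lemma~\ref{le:ZI} combined with the dimension count $\dim(\SCL\,\oli\SE(\oli f))=|\oli\SE|+|\oli\SD|$ already established in Theorem~\ref{th:GZI}.

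The main obstacle I expect is the careful bookkeeping of which elementary fields become the external lines~$\oli\SE$ of the contracted part and which remain internal to~$\ta$, together with the verification that the functional-derivative notation removes the external fields from the distinguished monomial~$\Ph$ without disturbing the others. A secondary delicate point is justifying the Leibniz split of the Taylor index across~$V_0$ and the remaining vertices of~$\SV$: this rests on treating $\oli x_\ta=\oli x_\SV$ as the joint subtraction point for all fields constituting~$\ta$, so that the convolution-type sum $\al'\cup\al''=\al$ is the faithful combinatorial image of the single multi-index~$\al$ that appears at the level of a single graph.
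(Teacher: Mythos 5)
Your proposal is correct and follows essentially the same route as the paper: the paper gives no separate proof of this corollary, stating only that it follows by applying the graph-summation procedure from the proof of Theorem~\ref{th:GZI} to the single-distinguished-vertex result of Lemma~\ref{le:ZI}, which is precisely what you carry out. Your key observation --- that any two renormalization parts containing $V_0$ intersect, so only the $c=1$ stratum of Theorem~\ref{th:GZI} survives and the outer sums collapse to a single $\sum_\SV\sum_\SE$ --- is exactly the point the paper relies on, and your unpacking of the multi-index $\al$ over the vertex loci of $\ta$ (the $\al'\cup\al''=\al$ split) correctly accounts for the remaining structural difference from the statement of Theorem~\ref{th:GZI}.
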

From Theorem \ref{th:GZI}, we read off that the subtraction degree is purely determined by the involved elementary fields $\ph$ and their covariant derivatives. 
This implies that the actual structure of subgraphs does not play a role for the renormalization prescription and it is sufficient to know the number and degree of external legs. 
But if we discard the structure of subgraphs, any subtraction degree of a renormalization part is affected by the chosen subtraction degree of sub-renormalization parts. 
Namely, consider an element $\g$ of a (saturated) forest $F$, where $\g_1,...,\g_c\in F$ are maximal with respect to $\g$. Then we have
\begin{align}
	\uvd(\g) = \uvd(\oli\g) + \sum_{i=1}^c \uvd(\g_i)
\end{align}
so that any $\de(\g_i)>\uvd(\g_i)$ breaks the renormalization prescription. 
But we may restore it by demanding that the inequality
\begin{align}
	\de(\g) \geq \de(\oli\g) + \sum_{i=1}^c \de(\g_i)
\end{align}
holds recursively.

\section{Normal Products}

We pointed out in the beginning of this work that changes of subtraction degrees may occur in limits of coinciding vertices of graph weights, thus require the application of the Zimmermann identity in Theorem \ref{th:GZI}. 
Furthermore we noticed that the limits may create new renormalization parts, join multiply connected components and induce overlap in existing forests. 
The latter does not appear in the momentum space treatment due to a deviating definition of renormalization parts and, therefore, in our approach requires an entirely new analysis. 
Let us discuss the various cases first. Suppose that there are several connected components. Then vertices taking part in the limit are either in one connected component or distributed over several components, not necessarily containing renormalization parts. 
By construction of the $R$-operation in configuration space, we may only exclude graphs, which consist of only those vertices which are not involved in the limit, since this corresponds to Wick ordering essentially. 
Clearly, we can define the weight $R_\De u_0^\e[\De]$ after performing the limit, i.e. $\G\rightarrow\De$. 
Assume that $\G= \hat\De$ consists of connected components $\G_1,...,\G_c$ such that
\begin{align}
	R_\G u_0^\e[\G] = \prod_{j=1}^c R_{\G_j} u_0^\e[\G_j].
\end{align}
Note that only vertices of the limit being in one connected component can induce the application of the Zimmermann identity, but overlap in existing forests can be created also in the case of multiple connected components. (see Figure \ref{fig:OverlapandZI})
\begin{figure}[h]
  \centering
	\begin{center}
	\begin{subfigure}[b]{0.45\textwidth}
    \centering
		\includegraphics[width=0.7\textwidth]{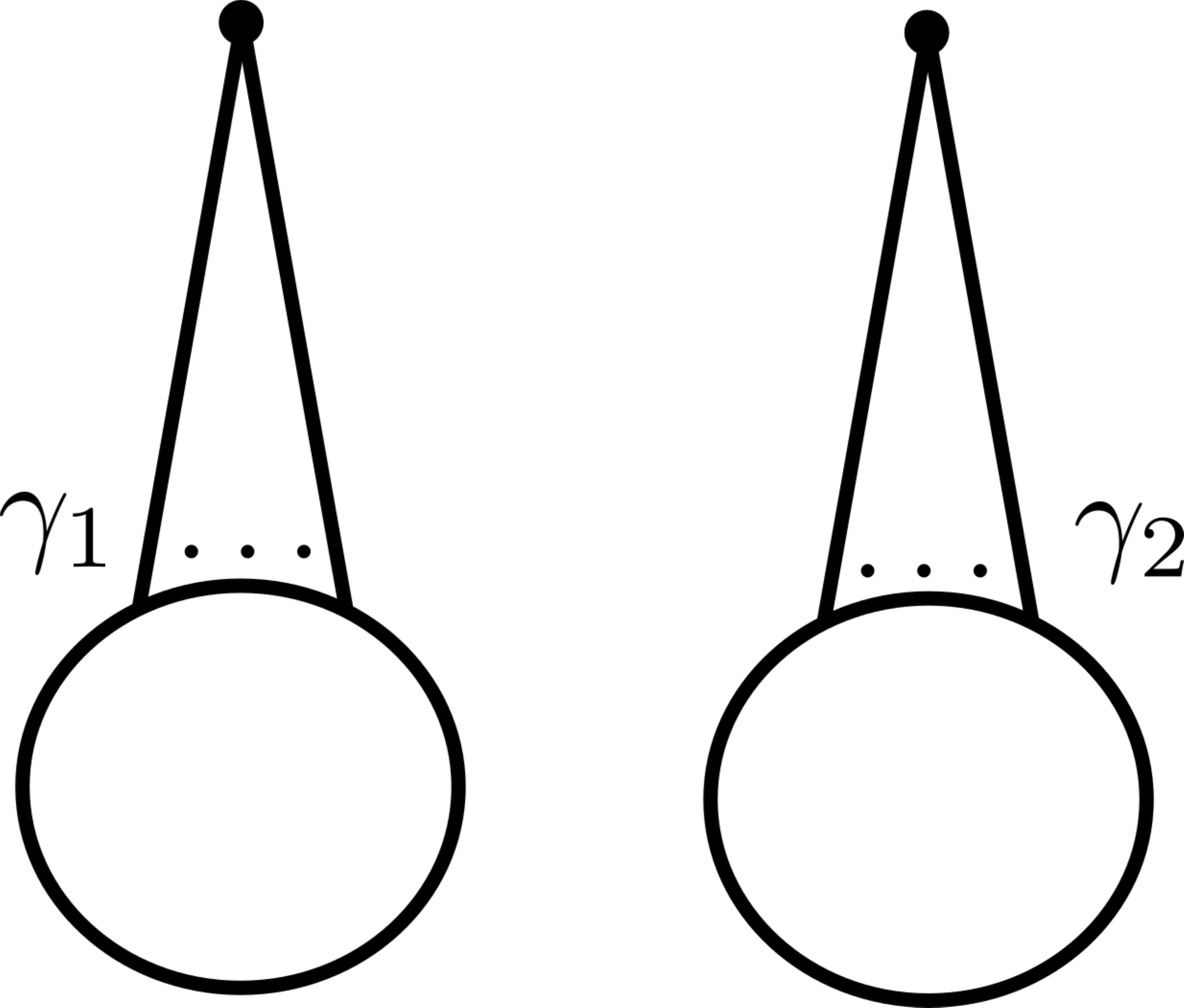}
		\caption{Disjoint renormalization parts.}
		\label{fig:2disjointRparts}	 	
	\end{subfigure}
	\begin{subfigure}[b]{0.45\textwidth}
    \centering
		\includegraphics[width=0.7\textwidth]{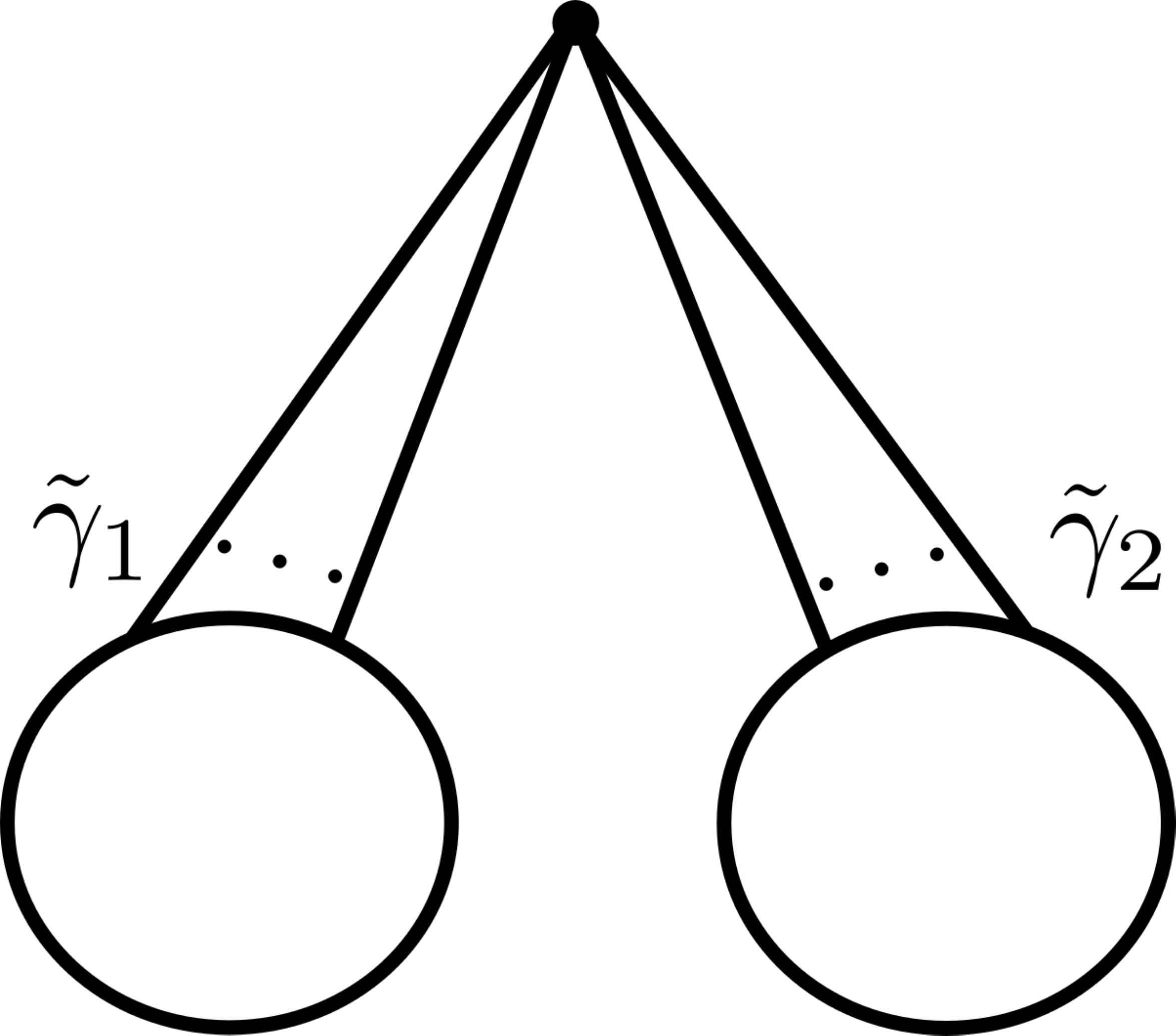}
		\caption{Overlapping renormalization parts.}
		\label{fig:ovelapRparts}	 	
	\end{subfigure}

	\begin{subfigure}[b]{0.45\textwidth}
    \centering
		\includegraphics[width=0.7\textwidth]{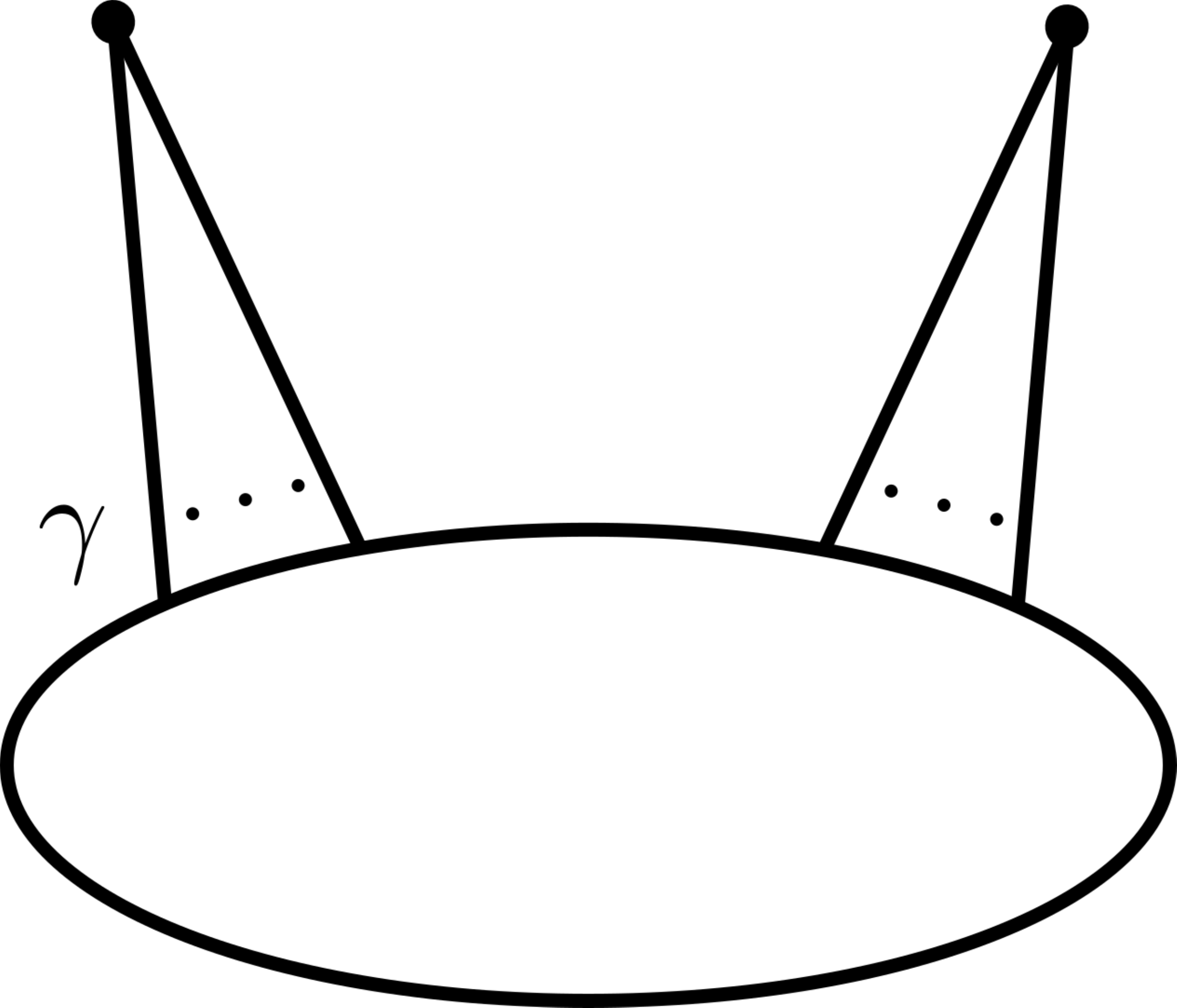}
		\caption{Renormalization part involving two limit vertices.}
		\label{fig:bigRpart}	 	
	\end{subfigure}
	\begin{subfigure}[b]{0.45\textwidth}
    \centering
		\includegraphics[width=0.7\textwidth]{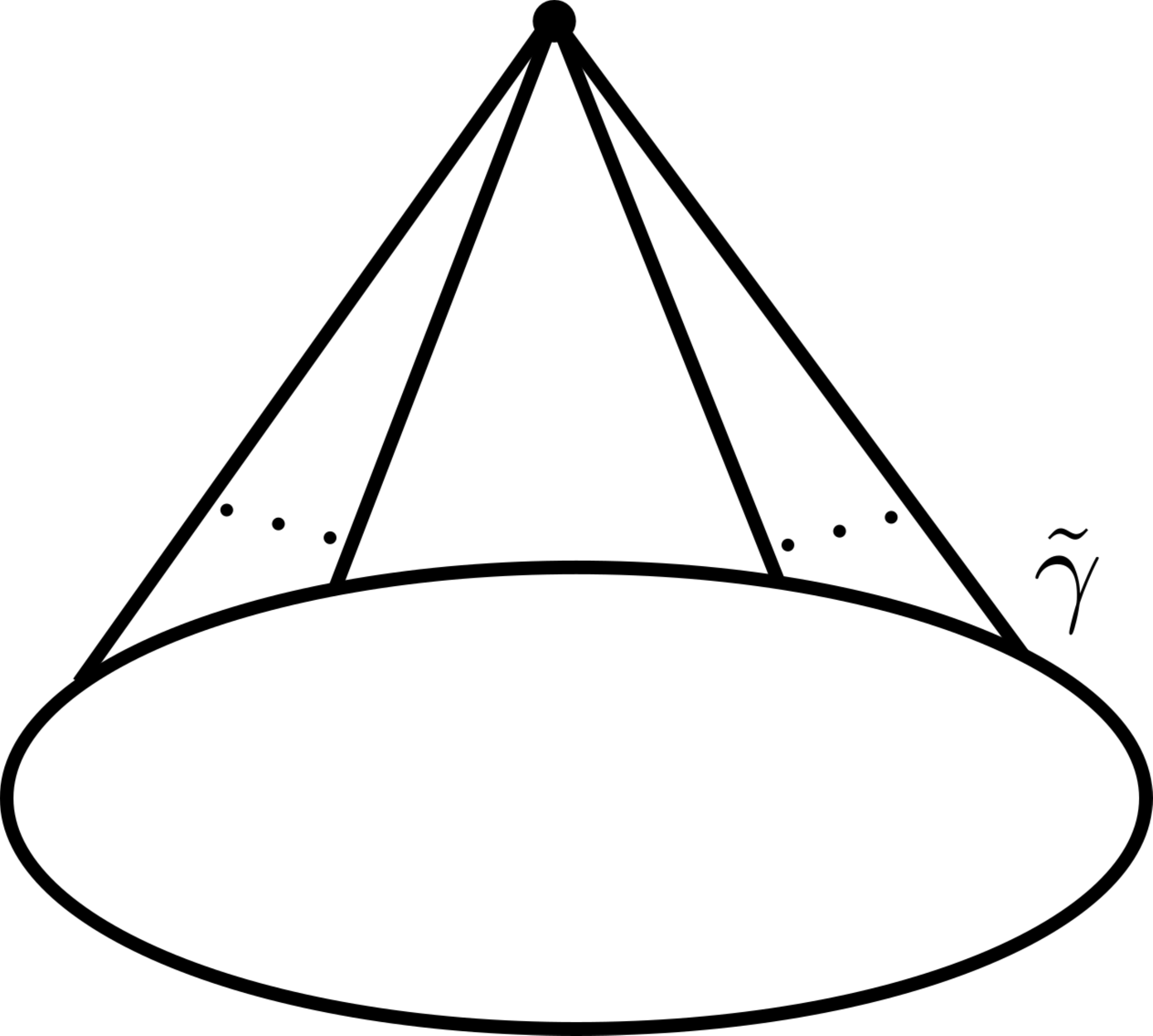}
		\caption{Renormalization part with increased subtraction degree.}
		\label{fig:ZIRpart}	 	
	\end{subfigure} 
	\end{center}
	\caption{Illustration of overlap creation before (a) and after (b) the coincidence limit, respectively, as well as an increase of subtraction degree and the necessity of the Zimmermann identity in the transition from (c) to (d). }
	\label{fig:OverlapandZI}	
\end{figure}

We begin with the case of two monomials coinciding in the limit. 
Apart from didactical purposes, the reason for this lies in the fact that one may iterate this limit in order to arrive at a limit of more than two vertices. 
But this approach leaves us with the problem whether all sequences of partial limits lead to the same result. 
This question of \emph{associativity} will not be answered in this work. 
Nevertheless consider two monomials $\Ph_1(f_1)$ and $\Ph_2(f_2)$ inserted into the time-ordered product
\begin{align}
	\CT_{R}^\mathrm{conn} \left\{ N_{\de_1}[\Ph_1(f_1)] N_{\de_2}[\Ph_2(f_2)] \prod_{j=3}^n N_{\de_j}[\Ph_j(f_j)] \right\}
\end{align}
restricted to contributions with only one connected component $\G$. 
We associate $V_1\in V(\G)$ and $V_2\in V(\G)$ to $\Ph_1(f_1)$ and $\Ph_2(f_2)$, respectively. 
If there exist non-overlapping renormalization parts $\g_1$ with $V_1\in V(\g_1)$ and $\g_2$ with $V_2\in V(\g_2)$, then $\widetilde{\g_1\cup\g_2}$ is a renormalization part in $\De$ but also $\tilde \g_1$ and $\tilde \g_2$ are overlapping. 
Further, renormalization parts $\g_{12}$ with $V_1,V_2\in V(\g_{12})$ require the application of the Zimmermann identity in the transition to $\tilde \g_{12}$. 
Since 
\begin{align}
	\lim_{V_1,V_2\rightarrow V_0} \G = \tilde \G \bydef \De,
\end{align}
the limit for the weight $u_0^\e[\G]$ is only extendible a priori for $R_\De u_0^\e[\G]$. At this point, it becomes clear why we defined the subtraction point to be
\begin{align}
	\oli x_\g = \frac{1}{2|E(\g)|} \sum_{v\in V(\g)} |E(\g|v)| x_v
\end{align}
instead of the standard mean coordinate. 
Namely, the standard mean coordinate is discontinuous in the limit of coinciding arguments and hence does not allow for a comparison of subtractions in $R_\De u_0^\e[\G]$ and $R_\G u_0^\e[\G]$. 
Since the Taylor operators act on the line complement, we use $V(\g)$ and $V(\tilde \g)$ synonymously. 
In the work of Zimmermann \cite{Zimmermann:1972tv}, they would coincide because the vertices (single field operators) would correspond to external legs in $\G$. 
In our treatment, instead, we may find external legs attached to $V_1$ and $V_2$. 
Before deriving the corrections terms, we want to state a relation which is conjugate to an observation of Zimmermann \cite{Zimmermann:1972tv}. 
Due to the definition of the $R$-operation, a renormalization part itself remains unchanged and we find for renormalization parts $\si\in\G$ and $\ta\bydef\tilde\si$
\begin{align}\label{eq:GammaToDeltaAndBack}
	u_0^\e[\G] & = u_0^\e[\G\setminus\si] u_0^\e[\si] = u_0^\e[\G\setminus\tilde\si] u_0^\e[\si] = u_0^\e[\De\setminus\ta] u_0^\e[\si] \\
	u_0^\e[\De] & = u_0^\e[\G\setminus\si] u_0^\e[\ta].
\end{align} 
\begin{prop}\label{pr:NormalProduct2Vertex}
	The difference 
	\begin{multline}
		\CT_{R_\G}^\mathrm{conn} \left\{ N_{\de_1}[\Ph_1(f_1)] N_{\de_2}[\Ph_2(f_2)] \prod_{j=3}^n N_{\de_j}[\Ph_j(f_j)] \right\} -\\- \CT_{R_\De}^\mathrm{conn} \left\{ N_{\de_1}[\Ph_1(f_1)] N_{\de_2}[\Ph_2(f_2)] \prod_{j=3}^n N_{\de_j}[\Ph_j(f_j)] \right\},
	\end{multline}
	with $\De$ referring to forests constructed as if $f_1=f_2$ and $\G$ referring to the general setting, is given by
	\begin{multline} 
		\sum_{\SV_1,\SV_2} \sum_{\SE_1,\SE_2} \sum_{\al_1} \sum_{\al_2} \frac{1}{\al_1!\al_2!} \CT^\mathrm{conn}_{R} \Bigg\{ \prod_{k\in\oli \SV} N_{\de_k}[\Ph_k(f_k)] N_{\de_{\oli V_1}}\times \\
		\times [\SCL_{\al_1}D^{\al_1} \oli\SE_1(\oli f_1; f_{\SV_1})] N_{\de_{\oli V_2}}[\SCL_{\al_2}D^{\al_2} \oli\SE_2(\oli f_2; f_{\SV_2})] \Bigg\} \\
		+ \sum_{\SV_{12}} \sum_{\SE_{12}} \sum_{\al} \frac{1}{\al!} \CT^\mathrm{conn}_{R} \left\{ \prod_{k\in\oli \SV_{12}} N_{\de_k}[\Ph_k(f_k)] N_{\de_{\oli V_{12}}}[\SCL_\al D^\al \oli\SE_{12}(\oli f_{12}; f_{\SV_{12}})] \right\},
	\end{multline}
	where 
	\begin{align}
		& 0 < |\oli\SE_{1/2}| + |\oli\SD_{1/2}| + |\al_{1/2}| \leq \de_{\oli V_{1/2}}, \\
		& \de_{\widehat{\oli V}_{12}} < |\oli\SE_{12}| + |\oli\SD_{12}| + |\al| \leq \de_{\oli V_{12}}, \\
		\intertext{and}
		& \SCL_{\al_{1/2}} = \CT^\mathrm{conn}_{R} \left\{ \prod_{l\in\SV_{1/2}} (x_{l}-\oli x_{1/2})^{\al_{1/2}} N_{\de_{l}}[\Ph_{l}/\oli\SE_{1/2}(f_{l})] \right\},\\
		& \SCL_\al = \CT^\mathrm{conn}_{R} \left\{ \prod_{l\in\SV_{12}} (x_l-\oli x_{12})^\al N_{\de_l}[\Ph_l/\oli\SE_{12}(f_l)] \right\}.
	\end{align}
\end{prop}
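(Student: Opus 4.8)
The plan is to compare the two forest expansions $R_\G u_0^\e[\G]$ and $R_\De u_0^\e[\G]$ term by term and to isolate exactly those contributions in which the renormalization structure genuinely differs between the general configuration (distinct $V_1,V_2$) and the coincidence configuration ($V_1,V_2\to V_0$). By \eqref{eq:GammaToDeltaAndBack} a renormalization part $\si\subseteq\G$ and its image $\ta=\tilde\si\subseteq\De$ carry the identical weight $u_0^\e[\si]$, only the line complement being affected by the limit, so the two forest sets can be put into correspondence and every forest built solely from parts containing neither $V_1$ nor $V_2$ contributes identically to both $R$-operations and cancels in the difference. What survives are precisely the forests sensitive to the joining of $V_1$ and $V_2$, and I would organize these according to whether a participating renormalization part contains exactly one of the two distinguished vertices or both, reproducing the split $X=X_\G+X_\De$ announced in Section~2.

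First I would treat the contributions in which two parts $\g_1\ni V_1$ and $\g_2\ni V_2$ are non-overlapping in $\G$ but whose images $\tilde\g_1,\tilde\g_2$ overlap in $\De$, so that $\widetilde{\g_1\cup\g_2}$ appears as a new renormalization part. Taking $\ta_1=\tilde\g_1$ and $\ta_2=\tilde\g_2$ as mutually disjoint, Taylor-difference–minimal parts, I would run the contraction argument of Theorem~\ref{th:GZI} on both simultaneously: contracting each $\ta_i$ to a vertex $\oli V_i$ sends $\De\setminus\{\ta_1,\ta_2\}\mapsto\De/\{\ta_1,\ta_2\}$, and integrating out all but one argument of the normally subtracted weights $R_{\ta_i^\perp}u_0^\e[\ta_i]$ yields the structure functions $\SCL_{\al_i}$ attached to the contracted monomials $N_{\de_{\oli V_i}}[\SCL_{\al_i}D^{\al_i}\oli\SE_i(\oli f_i;f_{\SV_i})]$. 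Because $\widetilde{\g_1\cup\g_2}$ is a renormalization part only in $\De$, the entire Taylor range is produced, which is exactly why the degree window reads $0<|\oli\SE_i|+|\oli\SD_i|+|\al_i|\le\de_{\oli V_i}$, the lower bound being strict since the zeroth coefficient sits among the common forests already cancelled.

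Second I would treat parts $\g_{12}$ containing both $V_1$ and $V_2$, whose subtraction degree jumps from the split value $\de_{\widehat{\oli V}_{12}}$ to the coincidence value $\de_{\oli V_{12}}$. Setting $\ta_{12}=\tilde\g_{12}$, this is a single-part Zimmermann identity: applying Lemma~\ref{le:ZI} (equivalently Theorem~\ref{th:GZI} with one contracted vertex) gives the Taylor-difference sum $\sum_\al\frac{1}{\al!}$, the contraction $\De/\ta_{12}$, the structure function $\SCL_\al$, and the window $\de_{\widehat{\oli V}_{12}}<|\oli\SE_{12}|+|\oli\SD_{12}|+|\al|\le\de_{\oli V_{12}}$; here the lower bound is $\de_{\widehat{\oli V}_{12}}$ rather than $0$ precisely because $\g_{12}$ was already a renormalization part in $\G$, so only the oversubtraction excess survives the difference. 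In both families the sign bookkeeping follows the $(-1)^c$ rule of \eqref{eq:GZIGraph} with $c$ the number of contracted vertices, and the passage from the graph-level identity to the stated expression in terms of $\CT^\mathrm{conn}_R$ is carried out by summing over all connected graphs $\G$ and reassembling each contracted vertex $\oli V$ into a local Wick monomial, exactly as in the proof of Theorem~\ref{th:GZI}.

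The main obstacle is the overlap mechanism, which has no momentum-space analogue. I expect the delicate point to be showing that the extra forests of $\De$ carrying $\widetilde{\g_1\cup\g_2}$ are reproduced without remainder by the product of two independently contracted single-vertex corrections, with no spurious cross-terms, and that $\ta_1$ and $\ta_2$ may be declared simultaneously minimal in the sense of Proposition~\ref{pr:GZI}. Establishing this factorization—together with verifying that the choice $\oli x_\g=\frac{1}{2|E(\g)|}\sum_{v}|E(\g|v)|x_v$, whose continuity under the limit replaces the discontinuous standard mean, really makes the subtractions in $R_\De$ and $R_\G$ comparable on the same support—is where the substantive work lies.
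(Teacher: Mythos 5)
Your proposal is correct and follows essentially the same route as the paper: decompose the difference into overlap-creation corrections $X_\G$ from pairs of mutually disjoint $\G$-renormalization parts each containing one of $V_1,V_2$ (whose images overlap in $\De$), and Zimmermann-identity-type corrections $X_\De$ from minimal $\De$-parts containing $V_0$, subtracted by the Taylor difference $t(\ta)-t(\si)$ with the convention $t(\si)=0$ when $\si=\hat\ta$ is not a $\G$-renormalization part, then contract and sum over graphs as in Theorem~\ref{th:GZI}. The only small gloss to adjust: the strict lower bound $0<|\oli\SE_{1/2}|+|\oli\SD_{1/2}|+|\al_{1/2}|$ does not exclude the $|\al|=0$ Taylor coefficients (the paper's sums in $X_\G$ start at $|\al|=0$); it is satisfied automatically because a renormalization part has external legs.
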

\begin{proof}
We begin our computation by the decomposition of the forest formula for $\De$, i.e.
\begin{align}\label{eq:pOPEGraphTwoVertex}
	R_\De u_0^\e[\G] & = \sum_{F\in\SF_\De} \prod_{\g\in F} (-t(\g)) u_0^\e[\G] \\
	& = \underbrace{\sum_{F_\perp \in \SF_\perp} \prod_{\g\in F_\perp} (-t(\g)) u_0^\e[\G]}_{\neq R_\G u_0^\e[\G] \mbox{(in general)}} + \underbrace{\sum_{F_0\in\SF_0} \prod_{\g\in F_0} (-t(\g)) u_0^\e[\G]}_{\bydef X_\De u_0^\e[\G]}.
\end{align}
The set $\SF_0$ contains all $\De$-forests such that there exists a renormalization part containing $V_0$ in its vertex set. 
In analogy to the derivation of the Zimmermann identity of Lemma \ref{le:ZI}, we can find for any $F_0\in\SF_0$ a minimal graph $\ta$ containing $V_0$ such that
\begin{align}\label{eq:XDeltaTwoVerticesInitial}
	X_\De u_0^\e[\G] = \sum_{\ta\in\ST_\De} \sum_{\oli{F}_\ta \in \oli{\SF}_\ta} \prod_{\g\in\oli{F}_\ta} (-t(\g)) (-t(\ta)) \sum_{\ul{F}_\ta \in \ul{\SF}_\ta} \prod_{\g'\in \ul{F}_\ta} (-t(\g')) u_0^\e[\G]
\end{align}
Note that $\si \bydef \hat \ta$ may or may not be a renormalization part of $\G$, i.e. an element of $\G$-forests in $\SF_\G$. 
We take this into account by replacing $(-t(\ta))$ in \eqref{eq:XDeltaTwoVerticesInitial} with $(-(t(\ta)-t(\si)))$ and the convention that $t(\si)=0$ if $\si$ is not a renormalization part for $\G$ so that 
\begin{align}
	X_\De u_0^\e[\G] & = \sum_{\ta\in\ST_\De} \sum_{\oli{F}_\ta \in \oli{\SF}_\ta} \prod_{\g\in\oli{F}_\ta} (-t(\g)) (-(t(\ta)-t(\si))) \sum_{\ul{F}_\ta \in \ul{\SF}_\ta} \prod_{\g'\in \ul{F}_\ta} (-t(\g')) u_0^\e[\G] \\
	& = - \sum_{\ta\in\ST_\De} \sum_{|\al|=\de(\si)+1}^{\de(\ta)} \frac{1}{\al!} R_{\De/\ta}(D^\al_{\oli x_{\ta}} u_0^\e[\De/\ta]) (x-\oli x_\ta)^\al R_{\ta^\perp} u_0^\e[\si]
\end{align}
We observe that due to \eqref{eq:GammaToDeltaAndBack}, we have
\begin{align}
	R_{\De/\ta}(D^\al_{\oli x_\ta} u_0^\e[\De/\ta]) = R_{\G/\si}(D^\al_{\oli x_\si} u_0^\e[\G/\si])
\end{align}
but 
\begin{align}\label{eq:RtauRsigma}
	R_{\ta^\perp}u_0^\e[\si] \neq R_{\si^\perp} u_0^\e[\si]
\end{align}
holds in general. 
If the two terms in \eqref{eq:RtauRsigma} were to be equal, then we would also obtain
\begin{align}
	\sum_{F_\perp \in \SF_\perp} \prod_{\g\in F_\perp} (-t(\g)) u_0^\e[\G] = R_\G u_0^\e[\G]
\end{align}
in \eqref{eq:pOPEGraphTwoVertex}. 
Suppose that we have
\begin{align}
	R_\G u_0^\e[\G] = \sum_{F_\perp \in \SF_\perp} \prod_{\g\in F_\perp} (-t(\g)) u_0^\e[\G] + X_\G u_0^\e[\G],
\end{align}
where $X_\G u_0^\e[\G]$ contains all $\G$-forests which lose the poset structure in the limit of $V_j\rightarrow V_0$. 
Let $\SZ$ be the set pairs $(\z_1,\z_2)$ of mutually disjoint renormalization parts of $\G$, which contain $V_1$ or $V_2$, respectively. 
Then we obtain
\begin{align}
	X_\G u_0^\e[\G] & = \sum_{(\z_1,\z_2)\in \SZ} \sum_{\oli F_\z \in \oli \SF_\z} \sum_{\ul F_\z \in \ul \SF_\z} \prod_{\g\in\oli F_\z} (-t(\g)) (-t(\z_1)) (-t(\z_2)) \prod_{\g'\in \ul F_\z} (-t(\g')) u_0^\e[\G] \\
	& = \sum_{(\z_1,\z_2)\in \SZ} R_{\G/\z}((-t(\z_1)) (-t(\z_2)) u_0^\e[\G/\z]) R_{\z^\perp_1} u_0^\e[\z_1] R_{\z^\perp_2}u_0^\e[\z_2] \\
	& = \sum_{(\z_1,\z_2)\in \SZ} \sum_{|\al_1|=0}^{\de(\z_1)} \sum_{|\al_2|=0}^{\de(\z_2)} \frac{1}{\al_1!\al_2!} R_{\G/\z} (D^{\al_1}_{\oli x_{\z_1}} D^{\al_2}_{\oli x_{\z_2}} u_0^\e[\G/\z]) \times \\
	& \qquad \times (x-\oli x_{\z_1})^{\al_1} (x- \oli x_{\z_2})^{\al_2} R_{\z^\perp_1} u_0^\e[\z_1] R_{\z^\perp_2}u_0^\e[\z_2]
\end{align} 
and conclude
\begin{align}
	R_\De u_0^\e[\G] = & R_\G u_0^\e[\G] - X_\G u_0^\e[\G] + X_\De u_0^\e[\G] \\
	= & R_\G u_0^\e[\G] - \sum_{(\z_1,\z_2)\in \SZ} \sum_{|\al_1|=0}^{\de(\z_1)} \sum_{|\al_2|=0}^{\de(\z_2)} \frac{1}{\al_1!\al_2!} R_{\G/\z} (D^{\al_1}_{\oli x_{\z_1}} D^{\al_2}_{\oli x_{\z_2}} u_0^\e[\G/\z]) \times \\
	& \qquad \qquad \qquad \times (x-\oli x_{\z_1})^{\al_1} (x- \oli x_{\z_2})^{\al_2} R_{\z^\perp_1} u_0^\e[\z_1] R_{\z^\perp_2}u_0^\e[\z_2]\\
	& - \sum_{\ta\in\ST_\De} \sum_{|\al|=\de(\si)+1}^{\de(\ta)} \frac{1}{\al!} R_{\De/\ta}(D^\al_{\oli x_\ta} u_0^\e[\De/\ta]) (x-\oli x_\ta)^\al R_{\ta^\perp} u_0^\e[\si] .
\end{align}
Analogously to Theorem \ref{th:GZI}, we sum over all graphs $\G$
\begin{align}
	\sum_\G & ( R_\G u_0^\e[\G] - R_\De u_0^\e[\G] ) \\ 
	= & \sum_{\SV_1,\SV_2} \sum_{\SE_1,\SE_2} \sum_{\al_1} \sum_{\al_2} \frac{1}{\al_1!\al_2!} \CT^\mathrm{conn}_{R} \left\{ \prod_{k\in\oli \SV} N_{\de_k}[\Ph_k(f_k)] N_{\de_{\oli V_1}}[D^{\al_1} \oli\SE_1(\oli f_1)] N_{\de_{\oli V_2}}[D^{\al_2} \oli\SE_2(\oli f_2)] \right\} \times \nonumber \\
	& \quad\times \underbrace{\CT^\mathrm{conn}_{R} \left\{ \prod_{l\in\SV_1} (x_l-\oli x_{1})^{\al_1} N_{\de_l}[\Ph_l/\oli\SE_1(f_l)] \right\}}_{\simeq \SCL_{\al_1}(\oli x_{1}; x_{\SV_1})} \underbrace{\CT^\mathrm{conn}_{R} \left\{ \prod_{l'\in\SV_2} (x_{l'}-\oli x_{2})^{\al_2} N_{\de_{l'}}[\Ph_{l'}/\oli\SE_2(f_{l'})] \right\}}_{\simeq\SCL_{\al_2}(\oli x_{2}; x_{\SV_2})} \\
	& + \sum_{\SV_{12}} \sum_{\SE_{12}} \sum_{\al} \frac{1}{\al!} \CT^\mathrm{conn}_{R} \left\{ \prod_{k\in\oli \SV_{12}} N_{\de_k}[\Ph_k(f_k)] N_{\de_{\oli V_{12}}}[D^\al \oli\SE_{12}(\oli f_{12})] \right\} \times \nonumber \\
	& \qquad \qquad \qquad \qquad \qquad \qquad \qquad \qquad \times \underbrace{\CT^\mathrm{conn}_{R} \left\{ \prod_{l\in\SV_{12}} (x_l-\oli x_{12})^\al N_{\de_l}[\Ph_l/\oli\SE_{12}(f_l)] \right\}}_{\simeq\SCL_\al(\oli x_{12}; x_{\SV_{12}})},
\end{align}
where $x_{\SV_{\bullet}}$ indicates the dependence on all arguments assigned to vertices in $\SV_{\bullet}$ and the derivatives from Taylor subtractions are restricted by
\begin{align}
	0 < |\oli\SE_{1/2}| + |\oli\SD_{1/2}| + |\al_{1/2}| \leq \de_{\oli V_{1/2}}
\end{align}
for the corrections from $\sum_\G X_\G u_0^\e[\G]$ and
\begin{align}
	\de_{\widehat{\oli V}_{12}} < |\oli\SE_{12}| + |\oli\SD_{12}| + |\al| \leq \de_{\oli V_{12}}
\end{align}
for the corrections from $\sum_\G X_\De u_0^\e[\G]$, where $\de_{\widehat{\oli V}_{12}} > 0$ holds only in the case of applied Zimmermann identity. 
This proves the assertion.
\end{proof}
We remark that the corrections $\sum_\G X_\G u_0^\e[\G]$ do not appear in the derivation of \cite{Zimmermann:1972tv} and arise from the differing definition of renormalization parts. 
In particular, those corrections do not have the desired form, i.e. they are expressed by two instead of just one local field monomial so that, in the limit of coinciding arguments, singularities occur on the level of distributions instead of functions $\SCL$. 
Nevertheless we may apply Proposition \ref{pr:NormalProduct2Vertex} to $\sum_\G X_\G u_0^\e[\G]$ and obtain corrections in the desired form after a finite number of iterations since we considered only time-ordered products with finitely many field monomials.

Next, we generalize Proposition \ref{pr:NormalProduct2Vertex} to the case of $n$ monomials in the coincidence limit. 
Let us first look at a time-ordered product
\begin{align}
	\CT_{R} \left\{ \prod_{j=1}^n N_{\de_j}[\Ph_j(f_j)] \right\},
\end{align}
where all monomials take part in the limit of coinciding arguments. 
Expanding as usual into graphs, the resulting graph $\De$ after the limit is a bouquet graph, i.e. a graph with one vertex and edges $e$ with $s(e)=t(e)$, which is both vanishing under action of the $R$-operation and regularized by Wick-ordering of the whole product of monomials. 
But parts of bouquet graphs emerge also in the treatment of the time-ordered products with two sets of monomials
\begin{align}\label{eq:TimeOrdNVertexCase}
	\CT_{R} \Bigg\{ \underbrace{\prod_{i=1}^n N_{\de_i}[\Ph_i(f_i)]}_{\mathrm{limit-vertices}} \prod_{j=1}^m N_{\de_{n+j}}[\Ph_{n+j}(f_{n+j})] \Bigg\},
\end{align}
where only the first $n$ monomials are affected by the limit. 
Again, we emphasize that contributions from tadpoles vanish after applying the $R$-operation. 
Due to this, we restrict \eqref{eq:TimeOrdNVertexCase} to contributions without edges among limit-vertices after the application of Wick's theorem and indicate this by
\begin{align}
	\prod_{i=1}^n N_{\de_i}[\Ph_i(f_i)] \stackrel{\mathrm{Wick}}{\longrightarrow} : \prod_{i=1}^n N_{\de_i}[\Ph_i(f_i)] :.
\end{align}
Expanding \eqref{eq:TimeOrdNVertexCase} into graphs $\G$ and setting as above $\De \bydef \tilde \G$, we have to compare $R_\De u_0^\e[\G]$ with $R_\G u_0^\e[\G]$ again. 
In this case the comparison is significantly more involved since we may find several limit-vertices at one connected component of $\G$ and we may have several connected components of $\G$ each containing at least one limit-vertex. 
Hence suppose that $\G$ has $\G_1,..., \G_k$ connected components, where each $\G_j$ contains $n_j$ limit-vertices with $\sum_j n_j = n$. 
Analogously to the 2-vertex case, we expect to obtain corrections from $\De$-forests as well as $\G$-forests, i.e.
\begin{align}
	R_\De u_0^\e[\G] = R_\G u_0^\e[\G] - X_\G u_0^\e[\G] + X_\De u_0^\e[\G].
\end{align}
For the corrections regarding overlap creation in forests, induced by the limit, we observe that those terms consist of all sets $\{\z\}_c$ of $c$ mutually disjoint renormalization parts $\z$ in $\G$, where each renormalization part contains at least one limit-vertex. 
We find sets with $2\leq c \leq n$ and subsume those in $\SZ_c$ so that
\begin{align}
	X_\G u_0^\e[\G] & = \sum_{c=2}^n \sum_{\{\z\}_c \in \SZ_c} \sum_{\oli F_{\{\z\}_c} \in \oli \SF_{\{\z\}_c}} \sum_{\ul F_{\{\z\}_c} \in \ul \SF_{\{\z\}_c}} \prod_{\g\in\oli F_{\{\z\}_c}} (-t(\g)) \prod_{l=1}^k (-t(\z_l)) \prod_{\g'\in \ul F_{\{\z\}_c}} (-t(\g')) u_0^\e[\G] \\
	& = \sum_{c=2}^n (-1)^c \sum_{\{\z\}_c \in \SZ_c} \sum_{\al_1,...,\al_c} \frac{1}{\al_1!...\al_c!} R_{\G/\{\z\}_c} (D^{\al_1}_{\oli x_{\z_1}}...D^{\al_c}_{\oli x_{\z_c}} u_0^\e[\G/\{\z\}_c]) \times \nonumber \\
	& \qquad \qquad \qquad\qquad \qquad \qquad \qquad \qquad \qquad\qquad \qquad \times \prod_{l=1}^c (x_l - \oli x_{{\z_l}})^{\al_l} R_{\z_l^\perp}u_0^\e[\z_l],
\end{align} 
where the sums in $\al_j$ run from $0$ to $\de(\z_j)$. 
The contributions from all graphs are then given by
\begin{multline}
	\sum_\G X_\G u_0^\e[\G] = \sum_{c=2}^n \sum_{\{\SV\}_c} \sum_{\{\SE\}_c} \sum_{\al_1,...,\al_c} \frac{1}{\al_1!...\al_c!} \CT_{R}\left\{ \prod_{k\in\oli\SV} N_{\de_k}[\Ph_k(f_k)] \prod_{l=1}^c N_{\de_{\oli V_l}}[D^{\al_l}\oli\SE_l(\oli f_l)] \right\} \times \\
	\times \prod_{l'=1}^c \underbrace{\CT^\mathrm{conn}_{R} \left\{ \prod_{r\in \SV_{l'}} (x_r-\oli x_{l'})^{\al_{l'}} N_{\de_r} [(\Ph_r/\oli\SE_{l'})(f_r)] \right\}}_{\SCL_{\al_{l'}}(\oli x_{l'}; x_{\SV_{l'}})},
\end{multline}
where
\begin{align}
	0 < |\oli\SE_j| + |\oli\SD_j| + |\al_j| \leq \de_{\oli V_l}.
\end{align}
It is left to examine the correction terms related to the introduction of new renormalization parts or the increase of subtraction degree induced by the limit, respectively. Note that those terms can only appear once per forest $F$ if we require them to be minimal in $F$. Therefore we obtain
\begin{align}
	X_\De u_0^\e[\G] & = \sum_{\ta\in\ST_\De} \sum_{\oli F_\ta \in \oli \SF_\ta} \sum_{\ul F_\ta \in \ul \SF_\ta} \prod_{\g\in\oli F_\ta} (-t(\g)) (-(t(\ta) - t(\si))) \prod_{\g'\in \ul F_\ta} (-t(\g')) u_0^\e[\G] \\
	& = - \sum_{\ta\in\ST_\De} \sum_{|\al|=\de(\si)+1}^{\de(\ta)} \frac{1}{\al!} R_{\De/\ta}(D^\al_{\oli x_\si} u_0^\e[\G\setminus\si]) (x-\oli x_{\ta})^\al R_{\ta^\perp} u_0^\e[\ta]    ,
\end{align}
where $t(\si)=t(\hat\ta)=0$ if $\si$ is not a renormalization part. 
Again, summing over all graphs $\G$ we compute
\begin{multline}
	\sum_\G X_\De u_0^\e[\G] = \sum_\SV \sum_\SE \sum_\al \frac{1}{\al!} \CT_{R} \left\{ \prod_{k\in\oli\SV} N_{\de_k}[\Ph_k(f_k)] N_{\de_{\oli V}}[D^\al \oli\SE(\oli f)] \right\} \times \\
	\times \underbrace{\CT^\mathrm{conn}_{R} \left\{ \prod_{l\in\SV} (x_l - \oli x)^\al N_{\de_l}[(\Ph_l/\oli\SE)(f_l)] \right\}}_{\SCL_\al(\oli x; x_\SV)},
\end{multline}
with
\begin{align}
	\de_{\widehat{\oli V}} < |\oli\SE| + |\oli\SD| + |\al| \leq \de_{\oli V}
\end{align}
and $\de_{\widehat{\oli V}}>0$ only in the case of Zimmermann identity. 
Finally, we want to combine all contributions in the definition of normal products and perform the coincidence limit.
\begin{defi}\label{de:NormalProduct}
	A normal product of degree $\de$
	\begin{align}
		N_\de\Big[\prod_{i=1}^n \Ph_i(f_i)\Big]
	\end{align}
	with $\de = \sum_i \de_i \geq \sum_i \dim(\Ph_i)$ inserted into a time-ordered product with $m$ (spectator) monomials is defined by
	\begin{multline}
		\CT_{R} \left\{ N_\de[\prod_{i=1}^n \Ph_i(f_i)] \prod_{j=1}^m N_{\de_{n+j}}[\Ph_{n+j}(f_{n+j})] \right\} = \CT_{R} \left\{ :\prod_{i=1}^n N_{\de_j}[\Ph_i(f_i)]: \prod_{j=1}^m N_{\de_{n+j}}[\Ph_{n+j}(f_{n+j})] \right\} \\
		+ \sum_\SV \sum_\SE \sum_\al \frac{1}{\al!} \CT_{R} \left\{ \prod_{k\in\oli\SV} N_{\de_k}[\Ph_k(f_k)] N_{\de_{\oli V}}[\SCL_\al D^\al \oli\SE(\oli f; f_\SV)] \right\} \\
		- \sum_{c=2}^n \sum_{\{\SV\}_c} \sum_{\{\SE\}_c} \sum_{\al_1,...,\al_c} \frac{1}{\al_1!...\al_c!} \CT_{R}\left\{ \prod_{k\in\oli\SV} N_{\de_k}[\Ph_k(f_k)] N_{\oli\de}\Big[\prod_{l=1}^c \SCL_{\al_{l}} D^{\al_l}\oli\SE_l(\oli f_l; f_{\SV_{l}})\Big] \right\},
	\end{multline}
	where
	\begin{align}
		\oli\de & \bydef \sum_{l=1}^c \de_{\oli V_l} \\
		\de_{\widehat{\oli V}} & < |\oli\SE| + |\oli\SD| + |\al| \leq \de_{\oli V} \\
		0 & < |\oli\SE_j| + |\oli\SD_j| + |\al_j| \leq \de_{\oli V_l} .
	\end{align}
\end{defi}
\begin{thm}
	Let $\Ph_1,...,\Ph_n$ be field monomials with scaling dimensions $\de_1,...,\de_n$. Then
	\begin{align}
		\lim_{f_j\rightarrow f} N_\de[\Ph_1(f_1)...\Ph_n(f_n)] & = N_\de[(\Ph_1...\Ph_n)(f)] \\
		\de \geq \sum_{i=1}^n \de_i
	\end{align}
	if inserted to time-ordered products.
\end{thm}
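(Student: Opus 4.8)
The plan is to show that, once summed over all contributing graphs, the normal product of Definition~\ref{de:NormalProduct} coincides with the forest formula taken with respect to the merged graph $\De$, and that this object extends continuously across the diagonal on which the limit vertices collapse. First I would collect the three pieces of Definition~\ref{de:NormalProduct}---the Wick-ordered term, the single-vertex $X_\De$ correction and the overlap $X_\G$ correction---and recognise, via the decomposition
\begin{align}
	R_\De u_0^\e[\G] = R_\G u_0^\e[\G] - X_\G u_0^\e[\G] + X_\De u_0^\e[\G]
\end{align}
established for the $n$-vertex case in the discussion preceding Definition~\ref{de:NormalProduct} (and proved for two vertices in Proposition~\ref{pr:NormalProduct2Vertex}), that
\begin{align}
	\CT_{R}\left\{ N_\de\Big[\prod_{i=1}^n \Ph_i(f_i)\Big] \prod_{j=1}^m N_{\de_{n+j}}[\Ph_{n+j}(f_{n+j})] \right\} = \sum_\G R_\De u[\G].
\end{align}
Here the Wick-ordered term reproduces $\sum_\G R_\G u[\G]$---the tadpole contributions excluded by the colon vanish under the $R$-operation, since for a self-edge the subtraction point coincides with its vertex and the line complement vanishes---so that adding $X_\De$ and subtracting $X_\G$ converts $R_\G$ into $R_\De$ graph by graph. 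The recursive appearance of sub-normal-products in the corrections terminates after finitely many steps because only finitely many monomials are present.

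Having identified the normal product with $\sum_\G R_\De u[\G]$, the next step is to pass to the limit $f_1\otimes\dots\otimes f_n \to f$ inside this expression. The essential observation is that the forest set $\SF(\De)$, the subtraction degrees $\de(\ta)$ and, crucially, the subtraction points $\oli x_\ta$ are all fixed by the combinatorial data of the merged graph $\De$ and therefore do not change as the limit vertices are displaced. Because $\oli x_\ta$ is the weighted mean coordinate rather than the ordinary barycentre, it depends continuously on the vertex positions and stays well defined as $x_1,\dots,x_n\to x$; this continuity is precisely the reason for the choice of subtraction point emphasised after Proposition~\ref{pr:NormalProduct2Vertex}. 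Hence $R_\De u_0^\e[\G]$ is a family of distributions whose subtraction data is constant along the deformation towards the diagonal.

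The hard part will be to show that this family admits a continuous extension across the merged diagonal, i.e.\ that the limit genuinely exists. The point is that $R_\De$ performs exactly the subtractions dictated by the UV-degrees of divergence of $\De$, so that $R_\De u_0^\e[\G]$ is locally integrable in a neighbourhood of $\{x_1=\dots=x_n\}$; this is the content of the BPHZ convergence theorem of \cite{Pottel:2017bb} applied to $\De$, together with the local-integrability statements already used in Lemma~\ref{le:ZI}, where each factor of the form $(x-\oli x_\ta)^\al R^{(b)}_{\ta^\perp} u_0^\e[\ta]$ was seen to be locally integrable. What must be checked is that every renormalization part of $\De$ whose degree of divergence increases in the limit---tracked by the $X_\De$ corrections and the Zimmermann identity of Theorem~\ref{th:GZI}---is subtracted to sufficiently high order, which is guaranteed by the recursive degree inequality
\begin{align}
	\de(\g) \geq \de(\oli\g) + \sum_{i=1}^c \de(\g_i)
\end{align}
derived at the end of Section~3. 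Granting this, dominated convergence on the pairing with $f_1\otimes\dots\otimes f_n$ yields
\begin{align}
	\lim_{f_j\to f} \sum_\G R_\De u[\G] = \sum_\De R_\De u[\De].
\end{align}

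Finally I would identify the right-hand side with the asserted object. Read at coinciding arguments, the very same Definition~\ref{de:NormalProduct} expresses $\sum_\De R_\De u[\De]$ as $\CT_{R}\{ N_\de[(\Ph_1\dots\Ph_n)(f)] \cdots\}$, the normal product of the single merged monomial $\Ph_1\cdots\Ph_n$ at the point $f$, the degree constraint $\de=\sum_i\de_i\geq\sum_i\dim(\Ph_i)$ being inherited directly from the definition. This establishes the claimed equality inside time-ordered products. The principal conceptual obstacle throughout is the third step: in contrast to the momentum-space treatment of \cite{Zimmermann:1972tv}, renormalization parts here may carry external legs at the limit vertices, so one must verify that the oversubtractions encoded in $R_\De$ dominate not only the naive singularity of $u_0^\e[\De]$ but also the additional singular behaviour created by the coincidence of the limit vertices.
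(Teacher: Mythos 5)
Your proposal follows essentially the same route as the paper: identify the inserted normal product with $\sum_\G R_\De u_0^\e[\G]$ via the decomposition $R_\De u_0^\e[\G] = R_\G u_0^\e[\G] - X_\G u_0^\e[\G] + X_\De u_0^\e[\G]$, pass to the coincidence limit to obtain $R_\De u_0^\e[\De]$, and recognise this as the insertion of $N_\de[(\Ph_1\dots\Ph_n)(f)]$. The paper's own proof is a terse three-line version of exactly this argument; your additional justification of the limit step (continuity of the weighted subtraction point $\oli x_\ta$ and the convergence theorem applied to $\De$) fills in details the paper leaves implicit but does not change the approach.
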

\begin{proof}
	Given any time-ordered product. Inserting the normal product $N_\de[\Ph_1(f_1)...\Ph_n(f_n)]$ and expanding in graphs, we obtain
	\begin{align}
		\lim_{x_j\rightarrow x} (R_\G u_0^\e[\G] - X_\G u_0^\e[\G] + X_\De u_0^\e[\G]) = \lim_{x_j
		\rightarrow x} R_\De u_0^\e[\G],
	\end{align}
	where $\De = \tilde \G$. The insertion of $N_\de[(\Ph_1...\Ph_n)(f)]$ gives
	\begin{align}
		R_\De u_0^\e[\De]
	\end{align}
	and thus we arrive at
	\begin{align}
		\lim_{x_j\rightarrow x} R_\De u_0^\e[\G] - R_\De u_0^\e[\De] = 0.
	\end{align}
\end{proof}
We could continue and calculate products of normal products inserted into time-ordered products or relate normal products of different degree following the ideas in Theorem \ref{th:GZI}. 
Both calculations are feasible but tedious and not performed in this work. 

\section{Field Equation}

In the construction of Wick monomials and time-ordered products, we transfered to off-shell fields $\ph$, i.e. we did not demand that monomials of the type $N_\de[\Ph P\ph]$ vanish identically, where $P$ denotes the wave operator. 
The techniques developed in the Section above allow us to examine such monomials in detail. 
Therefore let us study
\begin{align}\label{eq:FieldEquationTOProduct}
	\CT_{R} \left\{ N_\de[\Ph P \ph(f)] \prod_{j=1}^n N_{\de_j}[\Ph_j(f_j)] \right\}. 
\end{align} 
Without loss of generality, we consider only a simply connected component $\G$ after the application of Wick's theorem. 
Denoting the vertex of the monomial $N_\de[\Ph P \ph(f)]$ by $V_0$, there exists exactly one edge $e_0\in E(\G)$, connecting $P\ph$ to a vertex $V_j\in V(\G)\setminus\{V_0\}$. But we know that
\begin{align}
	P_{V_0} H_F(x_{s(e_0)},x_{t(e_0)}) = \de(x_{s(e_0)},x_{t(e_0)})
\end{align}
and thus the vertices $V_0$ and $V_j$ in $\G$ get fused after evaluating the Dirac-$\de$-distribution resulting in a graph $\De$. Before we turn to the analysis of the change in the singularity structure in the transition from $\G$ to $\De$, we have to discuss the subtraction degree $\de$ of the monomial $\Ph P\ph$. 
In local form we have
\begin{align}
	P\ph(x) = g^{\m\n} \na_\m\na_n \ph(x) + b(x) \ph(x)
\end{align}
so that 
\begin{align}
	N_\de[\Ph P\ph(f)] = N_\de[\Ph g^{\m\n} \na_\m \na_\n \ph(f)] + N_\de[\Ph b \ph(f)].
\end{align}
However, we find $\dim(g^{\m\n} \na_\m \na_\n \ph) = \dim(b\ph) + 2$ and therefore have to relate $N_{\dim(\Ph)+3}[\Ph b\ph]$ to $N_{\dim(\Ph)+1}[\Ph b\ph]$. 
This is performed applying Corollary \ref{co:ZI} to \eqref{eq:FieldEquationTOProduct} with $a=3$ and $b=1$, i.e.
\begin{multline}
	\CT^\mathrm{conn}_{R} \left\{ N_{\dim(\Ph)+1}[\Ph b\ph(f)] \prod_{i=1}^n N_{\de_i}[\Ph_i(f_i)] \right\} = \CT^\mathrm{conn}_{R} \left\{ N_{\dim(\Ph)+3}[\Ph b\ph(f)] \prod_{i=1}^n N_{\de_i}[\Ph_i(f_i)] \right\} \\
	+ \sum_\SV \sum_\SE \sum_\al \frac{1}{\al!} \CT^\mathrm{conn}_{R} \left\{ \prod_{k\in\oli\SV} N_{\de_k}[\Ph_k(f_k)] N_{\dim(\Ph)+3} [\SCL_\al[b] D^\al \oli\SE(\oli f)] \right\}
\end{multline}
with $\dim(\Ph) + 1 < |\oli\SE| + |\al| \leq \dim(\Ph) + 3$. This allows us to work with $N_{\dim(\Ph + 3)}[\Ph P\ph(f)]$ and thus with $\de \geq \dim(\Ph)+3$ involving additional corrections from Zimmermann identities.

Next let us analyze the change of the singularity structure in the fusion process $\G \rightarrow \tilde \G \bydef \De$. 
For disjoint renormalization parts $\g_0$ with $V_0\in V(\g_0)$ and $\g_j$ with $V_j\in V(\g_j)$, surely $\widetilde{\g_0\cup\g_j}$ is a renormalization part but we also obtain overlap for $\tilde \g_0$ and $\tilde \g_j$. 
Further, there may exist renormalization parts $\g_{0j}$ already in $\G$, which change their subtraction by contracting $e_0$ but do not change the edge set on which the Taylor polynomial is computed.
To sum up, the occuring corrections resemble the result of Proposition \ref{pr:NormalProduct2Vertex} since only pairs of vertices $(V_0,V_j)$ are involved.
\begin{thm}\label{th:FieldEquation}
	The action of a wave operator $P$ appearing in a monomial $\Ph P\ph$ inserted into a time-ordered product is given by
	\begin{align}\label{eq:FieldEquationPsi}
		\CT_{R} & \left\{ N_\de[\Ph P \ph(f)] \prod_{j=1}^n N_{\de_j}[\Ph_j(f_j)] \right\} = \\
		& \sum_{j=1}^n \CT^\mathrm{conn}_{R} \left\{ N_{\de+\de_j-4}[\Ph \frac{\de}{\de\ph} \Ph_j(f_j)] \prod_{i=1,i\neq j}^n N_{\de_i}[\Ph_i(f_i)] \right\} \\
		& + \sum_{j=1}^n \sum_{\SV_0,\SV_j} \sum_{\SE_0,\SE_j} \sum_{\al_0,\al_j} \frac{1}{\al_0! \al_j!} \times \\
		& \times \CT^\mathrm{conn}_{R} \left\{ \prod_{k\in\oli\SV_{0,j}} N_{\de_k}[\Ph_k(f_k)] N_{\de_{\oli V_0}}[\SCL_{\al_0} D^{\al_0} \oli\SE_0(\oli f_0, f_{\SV_0})] N_{\de_{\oli V_j}}[\SCL_{\al_j} D^{\al_j} \oli\SE_j(\oli f_j; f_{\SV_j})] \right\} \\
		& + \sum_{j=1}^n \sum_{\SV_{0j}} \sum_{\SE_{0j}} \sum_\al \frac{1}{\al!} \CT^\mathrm{conn}_{R} \left\{ \prod_{k\in\oli\SV_{0j}} N_{\de_k}[\Ph_k(f_k)] N_{\de_{\oli V_{0j}}}[\SCL_\al D^\al \oli\SE_{0j}(\oli f_{0j};f_{\SV_{0j}})]\right\}
	\end{align}
	with conventions for multiindices $\al_\bullet$ and functions $\SCL_{\al_\bullet}$ as in Proposition \ref{pr:NormalProduct2Vertex}.
\end{thm}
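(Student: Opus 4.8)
The strategy is to reduce the claim to the two-vertex fusion already analysed in Proposition~\ref{pr:NormalProduct2Vertex}, the only new ingredient being the distributional identity $P_\e H_{F,\e}=\de$. First I would apply Wick's theorem~\eqref{eq:LocalWickOrdering} and, as announced, restrict to a single connected component $\G$, assigning the vertex $V_0$ to $N_\de[\Ph P\ph(f)]$. Since $P\ph$ is a single field it is joined to $\G$ by exactly one edge $e_0$, whose far endpoint I call $V_j$; the later sum over graphs lets $V_j$ range over the spectator vertices $1,\dots,n$. Because $P=g^{\m\n}\na_\m\na_\n+b$ raises the naive dimension of the derivative part to $\dim(\Ph)+3$, I first invoke Corollary~\ref{co:ZI} with $a=3$, $b=1$ to move to the common degree $\de\geq\dim(\Ph)+3$, so that the derivative and mass contributions of $\Ph P\ph$ are subtracted consistently and only the stated Zimmermann corrections are picked up.

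The leading term comes from the edge weight of $e_0$. With $P$ applied at $V_0$ this weight is $P_{x_{V_0}}H_{F,\e}(x_{V_0},x_{V_j})=\de(x_{V_0},x_{V_j})$, carrying any covariant derivatives of $\Ph_j$ on the $x_{V_j}$-side. The delta fuses $V_0$ and $V_j$ into one vertex at coinciding argument; the field of $P\ph$ is consumed while the field of $\Ph_j$ to which $e_0$ attached is deleted, so the fused vertex carries $\Ph\,\frac{\de}{\de\ph}\Ph_j$. Summing over the target $j$ and over the fields of $\Ph_j$ against which $e_0$ may contract produces the first sum of the assertion. Its subtraction degree is $\de+\de_j-4$, the deficit $4=\dim(M)$ being the scaling weight carried off by $\de(x_{V_0},x_{V_j})$; one checks that at $\de=\dim(\Ph)+3$ this equals $\dim(\Ph\,\frac{\de}{\de\ph}\Ph_j)$, which is precisely what the degree preparation above was arranged to guarantee.

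It remains to collect the change of the renormalization structure in the transition $\G\to\tilde\G=:\De$. Because only the pair $(V_0,V_j)$ is identified, this is formally the two-vertex fusion of Proposition~\ref{pr:NormalProduct2Vertex}, and I would compare $R_\G u_0^\e[\G]$ with $R_\De u_0^\e[\G]$ verbatim as there. The difference splits into the overlap corrections $X_\G$, assembled from pairs of mutually disjoint renormalization parts $\g_0\ni V_0$ and $\g_j\ni V_j$ which cease to form a poset once $\widetilde{\g_0\cup\g_j}$ becomes a renormalization part, and the corrections $X_\De$ from renormalization parts $\g_{0j}$ containing both vertices, whose subtraction degree is raised by the contraction of $e_0$ and which therefore require the Zimmermann identity of Lemma~\ref{le:ZI}. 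These produce the second and third sums, with the bounds on $|\oli\SE_\bullet|+|\oli\SD_\bullet|+|\al_\bullet|$ and the functions $\SCL_{\al_\bullet}$ inherited from Proposition~\ref{pr:NormalProduct2Vertex}. Summing over all graphs $\G$ and reassembling the local monomials into time-ordered products, exactly as in the passage to Theorem~\ref{th:GZI}, then yields the stated identity.

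The step I expect to be the genuine obstacle is tracking how the delta on $e_0$ interacts with the Taylor operators whose line complement contains it. For a renormalization part $\g$ with $V_0\in V(\g)$ but $V_j\notin V(\g)$ the edge $e_0$ lies in $\G\lineco\g$, so $t^{d(\g)}_{x_\g|\oli x_\g}$ acts directly on $\de(x_{V_0},x_{V_j})$ while the subtraction point $\oli x_\g$ itself depends on $x_{V_0}$; one must verify that this reproduces exactly the coincidence-limit contributions of Proposition~\ref{pr:NormalProduct2Vertex} and no spurious terms. This is where the choice of the continuous subtraction point $\oli x_\g=\frac{1}{2|E(\g)|}\sum_{v}|E(\g|v)|x_v$ is essential, since the naive mean would be discontinuous under the identification $x_{V_0}=x_{V_j}$ and would break the comparison of $R_\G$ and $R_\De$.
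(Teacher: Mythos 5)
Your proposal follows essentially the same route as the paper: degree preparation of $\Ph P\ph$ via Corollary~\ref{co:ZI} with $a=3$, $b=1$, fusion of $V_0$ and $V_j$ through $P_{V_0}H_{F,\e}=\de$ giving the $N_{\de+\de_j-4}[\Ph\,\tfrac{\de}{\de\ph}\Ph_j]$ term, and a verbatim transfer of the two-vertex comparison $R_\De u_0^\e[\G]=R_\G u_0^\e[\G]-X_\G+X_\De$ from Proposition~\ref{pr:NormalProduct2Vertex} before summing over graphs. The only addition is your closing caveat about the Taylor operators acting on the delta in the line complement, which the paper sidesteps by ``omitting the intermediate step of spelling out Taylor operators''; this is a legitimate point of care but not a different argument.
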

\begin{proof}
Analogously to the two-vertex case for normal products, we obtain for a single graph $\G$
\begin{align}
	R_\De u_0^\e[\De] & = R_\G u_0^\e[\G] - \sum_{(\z_0,\z_j)\in \SZ} \sum_{\oli F_\z \in \oli \SF_\z} \sum_{\ul F_\z \in \ul \SF_\z} \prod_{\g\in\oli F_\z} (-t(\g)) (-t(\z_0)) (-t(\z_j)) \prod_{\g'\in \ul F_\z} (-t(\g')) u_0^\e[\G] \\ & + \sum_{\ta_{0j}\in\ST_\De} \sum_{\oli{F}_\ta \in \oli{\SF}_\ta} \prod_{\g\in\oli{F}_\ta} (-t(\g)) (-(t(\ta_{0j})-t(\hat\ta_{0j}))) \sum_{\ul{F}_\ta \in \ul{\SF}_\ta} \prod_{\g'\in \ul{F}_\ta} (-t(\g')) u_0^\e[\G].
\end{align}
Omitting the intermediate step of spelling out Taylor operators, we compute directly the sum over all contributions $\G$ using Proposition \ref{pr:NormalProduct2Vertex} and obtain
\begin{multline}
	\sum_\G X_\G u_0^\e[\G] = \sum_{j=1}^n \sum_{\SV_0,\SV_j} \sum_{\SE_0,\SE_j} \sum_{\al_0,\al_j} \frac{1}{\al_0! \al_j!} \times \\
	\times \CT^\mathrm{conn}_{R} \left\{ \prod_{k\in\oli\SV_{0,j}} N_{\de_k}[\Ph_k(f_k)] N_{\de_{\oli V_0}}[\SCL_{\al_0} D^{\al_0} \oli\SE_0(\oli f_0; f_{\SV_0})] N_{\de_{\oli V_j}}[\SCL_{\al_j} D^{\al_j} \oli\SE_j(\oli f_j; f_{\SV_j})] \right\}
\end{multline}
with
\begin{align}
	\SCL_{\al_0}(\oli f_0; f_{\SV_0}) & = \CT^\mathrm{conn}_{R} \left\{ \prod_{l\in\SV_0\setminus\{V_0\}} (x_l-\oli x_{0})^{\al_0} N_{\de_l}[(\Ph_l/\oli\SE_0)(f_l)]  N_\de[(\Ph P\ph/\oli\SE_0)(f_0)] \right\}, \\
	\SCL_{\al_j} (\oli f_j; f_{\SV_j}) & = \CT^\mathrm{conn}_{R}\left\{ \prod_{l'\in\SV_j} (x_{l'}-\oli x_{j})^{\al_j} N_{\de_{l'}}[(\Ph_{l'}/\oli\SE_j)(f_{l'})]  \right\}
\end{align}
and
\begin{align}
	0 < |\oli\SE_{0/j}| + |\oli\SD_{0/j}| + |\al_{0/j}| \leq \de_{\oli V_0 / \oli V_j}. 
\end{align}
In the same manner, we compute for the Zimmermann identity correction terms that
\begin{align}
	\sum_\G X_\De u_0^\e[\G] = \sum_{j=1}^n \sum_{\SV_{0j}} \sum_{\SE_{0j}} \sum_\al \frac{1}{\al!} \CT^\mathrm{conn}_{R} \left\{ \prod_{k\in\oli\SV_{0j}} N_{\de_k}[\Ph_k(f_k)] N_{\de_{\oli V_{0j}}}[\SCL_\al D^\al \oli\SE_{0j}(\oli f_{0j}; f_{\SV_{0j}})]\right\},
\end{align}
where 
\begin{align}
	\SCL_\al (\oli f_{0j}; f_{\SV_{0j}}) = \CT^\mathrm{conn}_{R} \left\{ \prod_{l\in\SV_{0j}\setminus\{V_0\}} (x_l-\oli x_{0j})^\al N_{\de_l}[(\Ph_l/\oli\SE_{0j})(f_l)] N_\de[(\Ph P\ph/\oli\SE_{0j})(f_0)] \right\}
\end{align}
and 
\begin{align}
	\de_{\widehat{\oli{V}}_{0j}} < |\oli\SE_{0j}| + |\oli\SD_{0j}| + |\al| \leq \de_{\oli V_{0j}}.
\end{align}
The sum over all fused graphs $\De$ gives
\begin{align}
	\sum_{j=1}^n \CT^\mathrm{conn}_{R} \left\{ N_{\de+\de_j-4}[\Ph \frac{\de}{\de\ph} \Ph_j(f_j)] \prod_{i=1,i\neq j}^n N_{\de_i}[\Ph_i(f_i)] \right\}
\end{align}
and merging all contributions we arrive at the assertion.
\end{proof}
We finish this Section with the discussion of a special case of Theorem \ref{th:FieldEquation}. Suppose that $\Ph=1$, i.e. we consider an insertion $N_\de[P\ph(f)]$. Inserted into a time-ordered product and expanded in graphs, the vertex $N_\de[P\ph(f)]$ corresponds to an external line of those graphs. It follows that neither Zimmermann identity correction terms nor overlap creation correction terms can appear so that \eqref{eq:FieldEquationPsi} reduces to
\begin{multline}
	\CT_{R} \left\{ N_\de[P \ph(f)] \prod_{j=1}^n N_{\de_j}[\Ph_j(f_j)] \right\} =
	\sum_{j=1}^n \CT^\mathrm{conn}_{R} \left\{ N_{\de+\de_j-4}[\frac{\de}{\de\ph} \Ph_j(f_j)] \prod_{i=1,i\neq j}^n N_{\de_i}[\Ph_i(f_i)] \right\}.
\end{multline}

\section{Conclusion}

In the present work, we derived the notions of normal products and Zimmermann identities in the framework of configuration space BPHZ renormalization, which, in the original formulation in momentum space, turned out to be particularly well suited for the study of structural properties of a specific theory. 
Recall that the insertion of a normal product into time-ordered functions maintains local integrability in the coincidence limit after the application of the $R$-operation due to suitably chosen subtraction degrees. 
Provided this normal product depends on a parameter of the theory like the mass, we can study the behavior of the theory under changes of the parameter by inserting the derivative of associated normal product into every time-ordered product given as a formal power series. 
Again, we emphasize that these insertions do not require any additional renormalization techniques. 
This idea of relating insertions and derivatives with respect to parameters is formalized by the action principle \cite{Lowenstein:1971jk} such that parametric differential equations \cite{Zimmermann:1979fd,Hollands:2002ux,Brunetti:2009qc} should be derivable more conveniently. \\
Another application of the action principle can be constructed in analogy to the derivation of the field equation, where the application of the wave operator fuses vertices. 
In the same manner, normal products may be manipulated by other normal products using functional derivatives, which remove the full or just a part of monomial and replace it by or add another monomial, respectively. 
Specifically, an elementary field operator in an observable may be exchanged by a transformed elementary field operator while keeping a suitably large subtraction degree. 
With this, Ward identities should be representable as insertions into the full theory such that the effect of symmetry transformations \cite{Kraus:1991cq,Kraus:1992ru} becomes easier tractable. 
We remark that symmetries of $(M,g)$, thus symmetries of the Hadamard parametrix $H$, should be restored after the limit $\e\rightarrow 0$.

Having a full theory at hand also admits a simplification of the rather bulky results on Zimmermann identity and normal products. 
Since we considered only a single time-ordered product, it was cumbersome to keep track of fields constructing renormalization parts and newly formed vertices, thus Wick monomials, in the sum over all contributing graphs. 
In a full theory, having all possible graphs at our disposal, one may blow up a single vertex to an arbitrary renormalization part, which has to be compatible with the theory and has to have external lines matching the incident lines of the initial vertex. 
This blow-up-graph is again a graph of the theory and, vice versa, one may contract any renormalization part of a given graph to a single vertex obtaining another graph of the theory. 
The latter manipulation corresponds to the application of the $R$-operation, where we observe that the $R$-operation is performed independently of the structure of the renormalization part and, further, independently of the structure complement of the renormalization part. 
Hence the sum over all graphs containing a renormalization part, which is associated to a specific fixed vertex after contraction, can be split, after application of the $R$-operation, into a sum of renormalization parts and a sum of the complement. 
This is exactly the content of Zimmermann identities but with the difference that the sums become independent after the splitting, thus can be written in independent formal power series. 
It is evident that an analogous argument holds for normal products.
For an explicit example, we refer again to \cite{Pottel:2017fwz}.\\
The Zimmermann identity should certainly be used in the aforementioned studies of concrete theories, relating insertion of differing engineering dimension, and the definition of normal products should be of benefit in the study of operator product expansion \cite{Abdesselam:2016npc}, possibly on analytic spacetimes \cite{Hollands:2006ag}. 
Recall that we specifically emphasized the limit of two coinciding vertices, which is predestined for the question of associativity \cite{Holland:2015tia} and, furthermore, one may investigate in the convergence of the operator product expansion \cite{Hollands:2011gf}. 
As a final remark we shall point out that, differently from Zimmermann's result, the normal products have to be defined recursively due to the larger class of renormalization parts. 
It remains open whether and how this recursive definition simplifies if concrete models are considered.

\subsection*{Acknowledgments}

The author would like to thank Klaus Sibold for numerous discussions. The financial support by the Max Planck Institute for Mathematics in the Sciences and its International Max Planck Research
School (IMPRS) ``Mathematics in the Sciences'' is gratefully acknowledged. 

\bibliographystyle{halpha}
\bibliography{bphzl}

\begin{thebibliography}{DFKR14}

\bibitem[Abd16]{Abdesselam:2016npc}
A.~Abdesselam.
\newblock {A Second-Quantized Kolmogorov-Chentsov Theorem}.
\newblock 2016, 1604.05259.

\bibitem[BDF09]{Brunetti:2009qc}
R.~Brunetti, M.~Duetsch, and K.~Fredenhagen.
\newblock {Perturbative Algebraic Quantum Field Theory and the Renormalization
  Groups}.
\newblock {\em Adv.Theor.Math.Phys.}, 13:1541--1599, 2009, 0901.2038.

\bibitem[BF00]{Brunetti:1999jn}
R.~Brunetti and K.~Fredenhagen.
\newblock {Microlocal analysis and interacting quantum field theories:
  Renormalization on physical backgrounds}.
\newblock {\em Commun.Math.Phys.}, 208:623--661, 2000, math-ph/9903028.

\bibitem[BG72]{Bollini:1972ui}
C.~G. Bollini and J.~J. Giambiagi.
\newblock {Dimensional Renormalization: The Number of Dimensions as a
  Regularizing Parameter}.
\newblock {\em Nuovo Cim.}, B12:20--26, 1972.

\bibitem[BM77a]{Breitenlohner:1975hg}
P.~Breitenlohner and D.~Maison.
\newblock {Dimensionally Renormalized Green's Functions for Theories with
  Massless Particles. 1.}
\newblock {\em Commun. Math. Phys.}, 52:39, 1977.

\bibitem[BM77b]{Breitenlohner:1976te}
P.~Breitenlohner and D.~Maison.
\newblock {Dimensionally Renormalized Green's Functions for Theories with
  Massless Particles. 2.}
\newblock {\em Commun. Math. Phys.}, 52:55, 1977.

\bibitem[BP57]{Bogoliubov:1957gp}
N.~N. Bogoliubov and O.~S. Parasiuk.
\newblock {On the Multiplication of the causal function in the quantum theory
  of fields}.
\newblock {\em Acta Math.}, 97:227--266, 1957.

\bibitem[BRS76]{Becchi:1975nq}
C.~Becchi, A.~Rouet, and R.~Stora.
\newblock {Renormalization of Gauge Theories}.
\newblock {\em Annals Phys.}, 98:287--321, 1976.

\bibitem[BW14]{Bahns:2012pw}
D.~Bahns and M.~Wrochna.
\newblock {On-shell extension of distributions}.
\newblock {\em Annales Henri Poincare}, 15:2045--2067, 2014, 1210.5448.

\bibitem[CK00]{Connes:1999yr}
A.~Connes and D.~Kreimer.
\newblock {Renormalization in quantum field theory and the Riemann-Hilbert
  problem. 1. The Hopf algebra structure of graphs and the main theorem}.
\newblock {\em Commun. Math. Phys.}, 210:249--273, 2000, hep-th/9912092.

\bibitem[CK01]{Connes:2000fe}
A.~Connes and D.~Kreimer.
\newblock {Renormalization in quantum field theory and the Riemann-Hilbert
  problem. 2. The beta function, diffeomorphisms and the renormalization
  group}.
\newblock {\em Commun. Math. Phys.}, 216:215--241, 2001, hep-th/0003188.

\bibitem[CL76]{Clark:1976ym}
T.~E. Clark and J.~H. Lowenstein.
\newblock {Generalization of Zimmermann's Normal-Product Identity}.
\newblock {\em Nucl. Phys.}, B113:109--134, 1976.

\bibitem[DFKR14]{Duetsch:2013xca}
M.~Duetsch, K.~Fredenhagen, K.~J. Keller, and K.~Rejzner.
\newblock {Dimensional Regularization in Position Space, and a Forest Formula
  for Epstein-Glaser Renormalization}.
\newblock {\em J. Math. Phys.}, 55:122303, 2014, 1311.5424.

\bibitem[DZ17]{Dang:2017gma}
N.~V. Dang and B.~Zhang.
\newblock {Renormalization of Feynman amplitudes on manifolds by spectral zeta
  regularization and blow-ups}.
\newblock 2017, 1712.03490.

\bibitem[EFP10]{EbrahimiFard:2010yy}
K.~Ebrahimi-Fard and F.~Patras.
\newblock {Exponential renormalization}.
\newblock {\em Annales Henri Poincare}, 11:943--971, 2010, 1003.1679.

\bibitem[EG73]{Epstein:1973gw}
H.~Epstein and V.~Glaser.
\newblock {The Role of locality in perturbation theory}.
\newblock {\em Annales Poincare Phys.Theor.}, A19:211--295, 1973.

\bibitem[FHS10]{Falk:2009ug}
S.~Falk, R.~Hausling, and F.~Scheck.
\newblock {Renormalization in Quantum Field Theory: An Improved Rigorous
  Method}.
\newblock {\em J. Phys.}, A43:035401, 2010, 0901.2252.

\bibitem[FR15]{Fredenhagen:2015iia}
K.~Fredenhagen and K.~Rejzner.
\newblock {Perturbative Construction of Models of Algebraic Quantum Field
  Theory}.
\newblock 2015, 1503.07814.

\bibitem[GB03]{GraciaBondia:2002js}
Jose~M. Gracia-Bondia.
\newblock {Improved Epstein-Glaser renormalization in coordinate space. 1.
  Euclidean framework}.
\newblock {\em Math. Phys. Anal. Geom.}, 6:59--88, 2003, hep-th/0202023.

\bibitem[GBL03]{GraciaBondia:2002gq}
J.~M. Gracia-Bondia and S.~Lazzarini.
\newblock {Improved Epstein-Glaser renormalization. 2. Lorentz invariant
  framework}.
\newblock {\em J. Math. Phys.}, 44:3863--3875, 2003, hep-th/0212156.

\bibitem[GHP16]{Gere:2015qsa}
A.~G\'{e}r\'{e}, T.-P. Hack, and N.~Pinamonti.
\newblock {An analytic regularisation scheme on curved space–times with
  applications to cosmological space–times}.
\newblock {\em Class. Quant. Grav.}, 33(9):095009, 2016, 1505.00286.

\bibitem[GLZ74]{Gomes:1974cr}
M.~Gomes, J.~H. Lowenstein, and W.~Zimmermann.
\newblock {Generalization of the Momentum-Space Subtraction Procedure for
  Renormalized Perturbation Theory}.
\newblock {\em Commun. Math. Phys.}, 39:81, 1974.

\bibitem[GW17]{Gerard:2017apb}
C.~G\'{e}rard and M.~Wrochna.
\newblock {Analytic Hadamard states, Calder\'on projectors and Wick rotation
  near analytic Cauchy surfaces}.
\newblock 2017, 1706.08942.

\bibitem[Hai17]{Hairer:2017nmk}
M.~Hairer.
\newblock {An analyst's take on the BPHZ theorem}, 2017, 1704.08634.

\bibitem[Hep66]{Hepp:1966eg}
K.~Hepp.
\newblock {Proof of the Bogolyubov-Parasiuk theorem on renormalization}.
\newblock {\em Commun.Math.Phys.}, 2:301--326, 1966.

\bibitem[HH15]{Holland:2015tia}
J.~Holland and S.~Hollands.
\newblock {Associativity of the operator product expansion}.
\newblock {\em J. Math. Phys.}, 56(12):122303, 2015.

\bibitem[HK12]{Hollands:2011gf}
S.~Hollands and C.~Kopper.
\newblock {The operator product expansion converges in perturbative field
  theory}.
\newblock {\em Commun. Math. Phys.}, 313:257--290, 2012, 1105.3375.

\bibitem[Hol07]{Hollands:2006ag}
S.~Hollands.
\newblock {The Operator product expansion for perturbative quantum field theory
  in curved spacetime}.
\newblock {\em Commun. Math. Phys.}, 273:1--36, 2007, gr-qc/0605072.

\bibitem[Hol13]{Hollands:2010pr}
S.~Hollands.
\newblock {Correlators, Feynman diagrams, and quantum no-hair in deSitter
  spacetime}.
\newblock {\em Commun. Math. Phys.}, 319:1--68, 2013, 1010.5367.

\bibitem[H{\"o}r90]{hormander1990analysis}
L.~H{\"o}rmander.
\newblock {\em The analysis of linear partial differential operators:
  Distribution theory and Fourier analysis}.
\newblock Springer Study Edition. Springer-Verlag, 1990.

\bibitem[HW01]{Hollands:2001nf}
S.~Hollands and R.~M. Wald.
\newblock {Local Wick polynomials and time ordered products of quantum fields
  in curved space-time}.
\newblock {\em Commun.Math.Phys.}, 223:289--326, 2001, gr-qc/0103074.

\bibitem[HW02]{Hollands:2001fb}
S.~Hollands and R.~M. Wald.
\newblock {Existence of local covariant time ordered products of quantum fields
  in curved space-time}.
\newblock {\em Commun.Math.Phys.}, 231:309--345, 2002, gr-qc/0111108.

\bibitem[HW03]{Hollands:2002ux}
S.~Hollands and R.~M. Wald.
\newblock {On the renormalization group in curved space-time}.
\newblock {\em Commun. Math. Phys.}, 237:123--160, 2003, gr-qc/0209029.

\bibitem[HW15]{Hollands:2014eia}
S.~Hollands and R.~M. Wald.
\newblock {Quantum fields in curved spacetime}.
\newblock {\em Phys. Rept.}, 574:1--35, 2015, 1401.2026.

\bibitem[IZ80]{Itzykson:1980rh}
C.~Itzykson and J.~B. Zuber.
\newblock {\em {Quantum Field Theory}}.
\newblock International Series In Pure and Applied Physics. McGraw-Hill, New
  York, 1980.

\bibitem[Kel10]{Keller:2010xq}
K.~J. Keller.
\newblock {Dimensional Regularization in Position Space and a Forest Formula
  for Regularized Epstein-Glaser Renormalization}.
\newblock 2010, 1006.2148.

\bibitem[Kra98]{Kraus:1997bi}
E.~Kraus.
\newblock {Renormalization of the Electroweak Standard Model to All Orders}.
\newblock {\em Annals Phys.}, 262:155--259, 1998, hep-th/9709154.

\bibitem[KS92]{Kraus:1991cq}
E.~Kraus and K.~Sibold.
\newblock {Conformal transformation properties of the energy momentum tensor in
  four-dimensions}.
\newblock {\em Nucl. Phys.}, B372:113--144, 1992.

\bibitem[KS93]{Kraus:1992ru}
E.~Kraus and K.~Sibold.
\newblock {Local couplings, double insertions and the Weyl consistency
  condition}.
\newblock {\em Nucl. Phys.}, B398:125--154, 1993.

\bibitem[Low71]{Lowenstein:1971jk}
J.~H. Lowenstein.
\newblock {Differential vertex operations in Lagrangian field theory}.
\newblock {\em Commun. Math. Phys.}, 24:1--21, 1971.

\bibitem[Low76]{Lowenstein:1975ps}
J.~H. Lowenstein.
\newblock {Convergence Theorems for Renormalized Feynman Integrals with
  Zero-Mass Propagators}.
\newblock {\em Commun. Math. Phys.}, 47:53--68, 1976.

\bibitem[LS76]{Lowenstein:1975ku}
J.~H. Lowenstein and E.~R. Speer.
\newblock {Distributional Limits of Renormalized Feynman Integrals with
  Zero-Mass Denominators}.
\newblock {\em Commun. Math. Phys.}, 47:43--51, 1976.

\bibitem[LZ75]{Lowenstein:1975rg}
J.~H. Lowenstein and W.~Zimmermann.
\newblock {The Power Counting Theorem for Feynman Integrals with Massless
  Propagators}.
\newblock {\em Commun. Math. Phys.}, 44:73--86, 1975.

\bibitem[Pot17a]{Pottel:2017aa}
S.~Pottel.
\newblock {A BPHZ Theorem in Configuration Space}.
\newblock 2017, 1706.06762.

\bibitem[Pot17b]{Pottel:2017bb}
S.~Pottel.
\newblock {Configuration Space BPHZ Renormalization on Analytic Spacetimes}.
\newblock 2017, 1708.04112.

\bibitem[Pot18]{Pottel:2017fwz}
S.~Pottel.
\newblock {BPHZ renormalization in configuration space for the
  $\mathcal{A}^4$-model}.
\newblock {\em Nucl. Phys.}, B927:274--293, 2018, 1709.10194.

\bibitem[PR81]{Piguet:1980nr}
O.~Piguet and A.~Rouet.
\newblock {Symmetries in Perturbative Quantum Field Theory}.
\newblock {\em Phys. Rept.}, 76:1, 1981.

\bibitem[Pra99]{Prange:1997iy}
D.~Prange.
\newblock {Epstein-Glaser renormalization and differential renormalization}.
\newblock {\em J. Phys.}, A32:2225--2238, 1999, hep-th/9710225.

\bibitem[PS86]{Piguet:1986ug}
O.~Piguet and K.~Sibold.
\newblock {\em {Renormalized Supersymmetry. The Perturbation Theory of N=1
  Supersymmetric Theories in Flat Space-Time}}.
\newblock Progress In Physics, 12. Boston, Usa: Birkhaeuser, 1986.

\bibitem[PS10]{Pottel:2010hp}
S.~Pottel and K.~Sibold.
\newblock {Conformal Transformations of the S-Matrix: $\beta$-Function
  Identifies Change of Spacetime}.
\newblock {\em Phys. Rev.}, D82:025001, 2010, 1004.3180.

\bibitem[Spe71]{Speer:1972wz}
E.~R. Speer.
\newblock {On the structure of analytic renormalization}.
\newblock {\em Commun. Math. Phys.}, 23:23--36, 1971.
\newblock [Erratum: Commun. Math. Phys.25,336(1972)].

\bibitem[Ste00]{Steinmann:2000nr}
O.~Steinmann.
\newblock {\em {Perturbative quantum electrodynamics and axiomatic field
  theory}}.
\newblock 2000.

\bibitem[tHV72]{'tHooft:1972fi}
G.~'t~Hooft and M.~J.~G. Veltman.
\newblock {Regularization and Renormalization of Gauge Fields}.
\newblock {\em Nucl. Phys.}, B44:189--213, 1972.

\bibitem[Tyu75]{Tyutin:1975qk}
I.~V. Tyutin.
\newblock {Gauge Invariance in Field Theory and Statistical Physics in Operator
  Formalism}.
\newblock 1975, 0812.0580.

\bibitem[WZ72]{Wilson:1972ee}
K.~G. Wilson and W.~Zimmermann.
\newblock {Operator product expansions and composite field operators in the
  general framework of quantum field theory}.
\newblock {\em Commun. Math. Phys.}, 24:87--106, 1972.

\bibitem[Zav90]{Zavyalov:1990kv}
O.~I. Zavyalov.
\newblock {\em {Renormalized quantum field theory}}.
\newblock 1990.

\bibitem[Zim68]{Zimmermann:1968mu}
W.~Zimmermann.
\newblock {The power counting theorem for minkowski metric}.
\newblock {\em Commun.Math.Phys.}, 11:1--8, 1968.

\bibitem[Zim69]{Zimmermann:1969jj}
W.~Zimmermann.
\newblock {Convergence of Bogolyubov's method of renormalization in momentum
  space}.
\newblock {\em Commun. Math. Phys.}, 15:208--234, 1969.

\bibitem[Zim73a]{Zimmermann:1972te}
W.~Zimmermann.
\newblock {Composite operators in the perturbation theory of renormalizable
  interactions}.
\newblock {\em Annals Phys.}, 77:536--569, 1973.

\bibitem[Zim73b]{Zimmermann:1972tv}
W.~Zimmermann.
\newblock {Normal products and the short distance expansion in the perturbation
  theory of renormalizable interactions}.
\newblock {\em Annals Phys.}, 77:570--601, 1973.

\bibitem[Zim80]{Zimmermann:1979fd}
Wolfhart Zimmermann.
\newblock {The Renormalization Group of the Model of $A^4$ Coupling in the
  Abstract Approach of Quantum Field Theory}.
\newblock {\em Commun. Math. Phys.}, 76:39, 1980.

\end{thebibliography}

\end{document}